\documentclass[onecolumn]{revtex4-2}
\usepackage[T1]{fontenc}
\usepackage{mathtools}
\usepackage{amssymb}
\usepackage{amsthm}
\usepackage{thmtools}
\usepackage{hyperref}
\usepackage{physics}
\usepackage{graphicx}
\usepackage{subfigure}
\usepackage{float}
\usepackage{bm}
\usepackage{stmaryrd}
\usepackage{verbatim}

\newtheorem{definition}{Definition}
\newtheorem{lemma}{Lemma}
\newtheorem{theorem}{Theorem}
\newtheorem{proposition}{Proposition}

\newtheorem{example}{Example}

\begin{document}
	
	\title{Quantifying and detecting quantum-state texture}

	\author{Xiangyu Chen}
	\email{Electronic address: 23S012014@stu.hit.edu.cn}
	\affiliation{School of Mathematics, Harbin Institute of Technology, Harbin 150001, China}
	\author{Qiang Lei}
	\email{Electronic address: leiqiang@hit.edu.cn}
	\affiliation{School of Mathematics, Harbin Institute of Technology, Harbin 150001, China}
	
	\begin{abstract}
		Quantum-state texture is a recently proposed quantum resource that characterizes the inhomogeneity of a quantum state's matrix element distribution in the computational basis, enriching our understanding of quantum state structure. To expand its quantification toolkit and establish detection methods, in this article, we investigate the resource theory of texture from both quantitative and detection perspectives. First, we construct a texture measure $\mathcal{T}^{\text{GR}}_{\alpha,z}(\rho)$ based on the $\alpha$-$z$ R\'enyi relative entropy and present some of its inherent properties. Second, we analyze the mathematical relationships between several existing texture measures, revealing connections among different quantifiers. Finally, drawing on the witness concept from other resource theories, we systematically introduce texture witnesses into the texture theory and provide examples of texture witnesses with special properties.
		\par\noindent
		\textbf{keywords}: state texture, quantum resource theory, $\alpha$-$z$ R\'enyi relative entropy, witness
	\end{abstract}
	\maketitle
	
	\section{Introduction}
	\par\noindent\par 
	In quantum information theory, quantifying the physical properties of quantum systems has long been a focus of research. Quantum resource theory, primarily used to quantify and manage these physical properties, has given rise to resource theories such as entanglement \cite{Horodecki2009Quantum}, coherence \cite{baumgratz2014quantifying}, imaginarity \cite{hickey2018quantifying}, magic \cite{Howard2017application}, among others \cite{Gour2015resource,xu2016quantifying,Luo2017partial,Amaral2018Noncontextual,xu2019coherence,xu2023quantifying}. These theories treat specific physical attributes as operational resources, classifying all quantum states into resource states, which contain the target resource, and free states, which do not. Similarly, physical operations are categorized into resource operations, which may consume or transform the resource, and free operations, which cannot generate it. This formal theoretical framework not only deepens the understanding of the nature of quantum physical properties but also directly or indirectly guides tasks in quantum computing, communication, and cryptography (e.g., imaginarity resources can be utilized in the discrimination of quantum states and channels \cite{wu2021operational,wu2024resource}). Consequently, it enhances performance while optimizing resource consumption, vigorously promoting the practical application of quantum information processing.
	\par
	In 2024, Parisio made foundational work at the frontier of quantum resource theory by first proposing and systematically establishing the resource-theoretic framework for quantum-state texture (QST) \cite{parisio2024quantum}. On a Hilbert space $\mathcal{H}$, let $\mathcal{D}(\mathcal{H})$ denote the set of density matrices. After choosing a computational basis $\{\ket{i}\}$, the matrix elements of a density matrix completely determine its representation. If the real or imaginary part of each matrix element is mapped to a height coordinate in a three-dimensional space, each quantum state corresponds to a specific three-dimensional surface. The surfaces corresponding to the vast majority of quantum states are not entirely flat, exhibiting variations and undulations in their numerical distribution. This intrinsic deviation from a uniform distribution within the mathematical representation of a quantum state is defined as quantum-state texture in resource theory. Since texture is directly related to the distribution of amplitudes and phases of a quantum state under a specific basis, it provides a new dimension for characterizing quantum state structure and shows potential application value in quantum information, quantum computing, and quantum biology \cite{parisio2024quantum}. According to the QST resource theory framework, the unique texture-free state $f_1$ under the basis $\{\ket{j}\}_{j=1}^d$ is defined as:
	\begin{align*}
		f_1 = \ket{f_1}\bra{f_1}, \quad \text{where} \quad \ket{f_1} = \frac{1}{\sqrt{d}} \sum_{j=1}^{d} \ket{j}.
	\end{align*}
	Texture-free operations are all completely positive trace-preserving (CPTP) maps $\Phi: \rho \to \Phi(\rho)$ that satisfy the fixed-point condition: $\Phi(f_1) = f_1$. Let the Kraus operator representation of $\Phi$ be $\Phi(\cdot) = \sum_n K_n \cdot K_n^\dagger$ (satisfying $\sum_n K_n^\dagger K_n = I$). Since $f_1$ is a pure state, the above condition is equivalent to:
	\begin{align*}
		K_n \ket{f_1} = \alpha_n \ket{f_1}, \quad \forall n, \quad \alpha_n \in \mathbb{C}.
	\end{align*}
	\par
	Having clarified the definitions of free states and free operations, the core task of quantification is to define a texture measure $\mathcal{T}(\rho)$ that quantifies the amount of texture resource. A valid texture measure must satisfy the following three axioms \cite{parisio2024quantum}:
	\par\noindent
	\textbf{(T1) Non-negativity:} For any quantum state $\rho$, $\mathcal{T}(\rho) \geqslant 0$, and $\mathcal{T}(f_1) = 0$.
	\par\noindent
	\textbf{(T2) Monotonicity:} For any quantum state $\rho$ and any texture-free operation $\Phi$, $\mathcal{T}(\rho) \geqslant \mathcal{T}(\Phi(\rho))$.
	\par\noindent
	\textbf{(T3) Convexity:} For any set of quantum states $\{\rho_n\}$ and a probability distribution $\{p_n\}$ satisfying $p_n \geqslant 0$, $\sum_n p_n = 1$, $\mathcal{T}\left(\sum_n p_n \rho_n\right) \leqslant \sum_n p_n \mathcal{T}(\rho_n)$.
	\par
	As part of constructing the texture theory, Parisio simultaneously proposed the first concrete and experimentally accessible texture measure, state rugosity \cite{parisio2024quantum}. Subsequently, a series of researchers, following this framework, have proposed diverse texture measures from different mathematical and physical perspectives, advancing quantification research in this field. In late 2024, the work by Wang \textit{et al.} \cite{wang2025quantifying} initiated systematic research on texture measures. They proposed texture measures based on trace distance, geometric measure, fidelity, and Bures distance, rigorously proving they satisfy the aforementioned axioms. Their work not only provided practical measurement tools but also revealed that quantifiers common in other resource theories, such as the $l_1$-norm, robustness, and quantum relative entropy, are not suitable as valid measures in texture theory, highlighting the uniqueness of texture as a resource.
	\par
	In July 2025, the research by Zhang \textit{et al.} \cite{zhang2025quantum} further expanded the scope of measures by innovatively proposing texture measures based on weight and Tsallis relative entropy. Beyond providing new quantification methods, the deeper significance of this work lies in their demonstration that the replacement framework, which holds universally in resource theories like coherence \cite{yu2016alternative}, imaginarity \cite{xue2021quantification}, and block coherence \cite{xu2020general}, does not hold in the texture resource theory. This discovery underscores the distinct structural characteristics of texture theory compared to other resource theories. It is worth noting that the texture measure based on Tsallis relative entropy was also independently proposed and studied in depth by Cui \textit{et al.} in August 2025 \cite{cui2025quantum}. Their work additionally proposed texture measures based on $\alpha$-affinity and the $l_2$-norm, and quantified the interrelationships between texture resource and resources such as coherence, imaginarity, and predictability from multiple angles.
	\par
	In October 2025, Muthuganesan \cite{muthuganesan2026quantum} presented texture measures based on the Hellinger distance, quantum Jensen-Shannon divergence, and Wigner-Yanase-Dyson skew information. Muthuganesan also proved that the texture measure based on the Jensen-Shannon divergence obeys the no-broadcasting theorem \cite{Barnum1996Noncommuting}, providing an operational meaning for the texture resource in information processing tasks. Almost simultaneously, another work by Cao \textit{et al.} in October 2025 \cite{cao2026generalized} independently explored the measure based on the quantum Jensen-Shannon divergence, emphasizing its advantage as a well-defined, bounded alternative that avoids the divergence issues sometimes encountered with relative entropy measures. Furthermore, Cao \textit{et al.} extended the measure system to more general relative entropy forms, proposing texture measures based on sandwiched R\'enyi relative entropy and unified $(\alpha, \beta)$ relative entropy. These diverse texture measures demonstrate the variety and hierarchy of tools for quantifying texture.
	\par
	In this article, we investigate the resource theory of texture from both quantitative and detection dimensions. The structure of our work is as follows. In Section II, we introduce a new texture measure and analyze its properties. In Section III, we review several known valid texture measures and establish mathematical relationships between them. In Section IV, inspired by other resource theories, we define texture witnesses within the texture resource theory and provide construction methods for some texture witnesses with specific properties. Finally, a concise summary is presented in Section V.
	
	\section{A New Texture Measure}
	\par\noindent\par  
	To enrich the toolkit for quantifying texture, we construct a new texture measure based on the $\alpha$-$z$ Rényi relative entropy. This two-parameter functional is a generalization of various relative entropies, and the $\alpha$-$z$ Rényi relative entropy possesses the formal properties (e.g., data processing inequality) required for a valid resource measure. Consequently, a texture measure derived from it provides a unifying framework that naturally encompasses and connects several existing measures. Let $\mathcal{P}(\mathcal{H})$ denote the set of positive semidefinite matrices. For $\tau,\sigma\in\mathcal{D}(\mathcal{H})$, the $\alpha$-$z$ Rényi relative entropy is defined as \cite{audenaert2015alpha}:
	\begin{align*}
		D_{\alpha,z}(\tau||\sigma)=\frac{1}{\alpha-1}\log f_{\alpha,z}(\tau||\sigma),
	\end{align*}
	where $f_{\alpha,z}(\tau||\sigma)=\operatorname{Tr}(\tau^{\frac{\alpha}{2z}}\sigma^{\frac{1-\alpha}{z}}\tau^{\frac{\alpha}{2z}})^z=\operatorname{Tr}(\sigma^{\frac{1-\alpha}{2z}}\tau^{\frac{\alpha}{z}}\sigma^{\frac{1-\alpha}{2z}})^z$.
	\par
	To construct and prove the texture measure to be presented, we first introduce a matrix inequality \cite{audenaert2007araki}:
	\begin{lemma}[Araki-Lieb-Thirring inequality]
		Let $A$ and $B$ be two positive semidefinite matrices, $r \geqslant 1$ and $q \geqslant 0$. The following inequality holds:
		\begin{align*}
			\operatorname{Tr}(ABA)^{rq} \leqslant \operatorname{Tr}(A^r B^r A^r)^q.
		\end{align*}
	\end{lemma}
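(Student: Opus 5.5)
Since the statement is homogeneous in $B$ and invariant under unitary conjugation $A\mapsto UAU^\dagger$, $B\mapsto UBU^\dagger$, I would first reduce to $A,B>0$; the semidefinite case then follows by continuity, replacing $A,B$ with $A+\epsilon I$, $B+\epsilon I$ and letting $\epsilon\to 0$. The strategy is the classical one of Lieb--Thirring and Araki. I first prove a \emph{log-majorization} statement for eigenvalues. Writing $\lambda(X)$ for the eigenvalues of a positive matrix $X$ in decreasing order, the claim is
\begin{align*}
\prod_{i=1}^k \lambda_i\bigl((ABA)^r\bigr) \leqslant \prod_{i=1}^k \lambda_i\bigl(A^rB^rA^r\bigr), \quad k=1,\dots,d,
\end{align*}
with equality at $k=d$ because both determinants equal $\det(A)^{2r}\det(B)^r$. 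Log-majorization implies weak majorization, and weak majorization gives $\sum_i g(\lambda_i(\cdot))$ monotone for every increasing convex $g$ with $g(e^t)$ convex in $t$. Applying this with $g(x)=x^q$ for $q>0$ yields $\operatorname{Tr}(ABA)^{rq}\leqslant \operatorname{Tr}(A^rB^rA^r)^q$. The case $q=0$ is trivial, since both sides count the rank.

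The key step is the case $k=1$, namely the operator-norm inequality $\|(ABA)^r\|\leqslant\|A^rB^rA^r\|$. Equivalently, for $0<t=1/r\leqslant 1$ and $X=A^{1/t}$, $Y=B^{1/t}$, one needs $\|A^tB^tA^t\|\leqslant\|ABA\|^t$. Normalize so that $\|ABA\|=1$. Then $ABA\leqslant I$, which gives $B\leqslant A^{-2}$. By the Löwner--Heinz theorem, $x\mapsto x^t$ is operator monotone for $t\in[0,1]$, so $B^t\leqslant A^{-2t}$. Hence $A^tB^tA^t\leqslant I$, that is, $\|A^tB^tA^t\|\leqslant 1$. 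This is the step I expect to carry the real content, because operator monotonicity is the only non-formal ingredient.

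To pass from $k=1$ to general $k$, I would apply the norm inequality to the antisymmetric tensor powers $\wedge^k A$ and $\wedge^k B$. These are positive, multiplicative, and commute with taking powers, so
\begin{align*}
\wedge^k(ABA)^r=(\wedge^kA\,\wedge^kB\,\wedge^kA)^r, \qquad \wedge^k(A^rB^rA^r)=(\wedge^kA)^r(\wedge^kB)^r(\wedge^kA)^r.
\end{align*}
Moreover $\|\wedge^k X\|=\prod_{i\leqslant k}\lambda_i(X)$ for $X\geqslant 0$. Putting these together gives the log-majorization, and the trace inequality follows as described in the first paragraph.
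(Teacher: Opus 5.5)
The paper gives no proof of this lemma; it is quoted from the literature, so your argument has to stand on its own. Most of it is the standard Araki proof and is correct. Normalizing $\|ABA\|=1$, deducing $B\leqslant A^{-2}$, and applying L\"owner--Heinz gives $\|A^tB^tA^t\|\leqslant\|ABA\|^t$ for $t\in(0,1]$. Your substitution $X=A^{1/t}$ is written confusingly, but after renaming this is exactly $\|ABA\|^r\leqslant\|A^rB^rA^r\|$. The antisymmetric tensor powers then correctly give $\prod_{i\leqslant k}\lambda_i\bigl((ABA)^r\bigr)\leqslant\prod_{i\leqslant k}\lambda_i(A^rB^rA^r)$ for all $k$.

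The step that fails is the last one. You pass from log-majorization to weak majorization, and then use the fact that weak majorization controls $\sum_i g(\lambda_i)$ for increasing \emph{convex} $g$. But $g(x)=x^q$ is concave for $0<q<1$, and majorization alone gives nothing there: $(1,1)\prec(2,0)$, yet $\sqrt{1}+\sqrt{1}>\sqrt{2}+\sqrt{0}$. This range is not a side case. The paper applies the lemma with $q=\alpha\in(0,1)$ in the proof of (T1) in Theorem 1, and with $q=z_1$, which may lie in $[\frac12,1)$, in Proposition 1(2).

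The fix is to keep the logarithmic information rather than discarding it. In the positive definite case you already observed that the full products agree at $k=d$. So $\log\lambda\bigl((ABA)^r\bigr)\prec\log\lambda(A^rB^rA^r)$ is a genuine majorization of real vectors. Karamata's (Hardy--Littlewood--P\'olya) inequality with the convex function $h(s)=e^{qs}$ then yields $\sum_i\lambda_i\bigl((ABA)^r\bigr)^q\leqslant\sum_i\lambda_i(A^rB^rA^r)^q$ for every $q>0$. Put differently, the right criterion is that log-majorization controls $\sum_i g(\lambda_i)$ whenever $s\mapsto g(e^s)$ is increasing and convex; $g$ itself need not be convex. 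With this replacement your proof is complete.
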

	\par
	\begin{theorem}
		The function $\mathcal{T}^{\text{GR}}_{\alpha,z}(\rho)$ defined based on the $\alpha$-$z$ R\'enyi relative entropy is a valid texture measure:
		\begin{eqnarray*}
			\mathcal{T}^{\text{GR}}_{\alpha,z}(\rho)=1-f_{\alpha,z}(f_1||\rho),
		\end{eqnarray*}
		where $\alpha \in (0,1)$ and $\max\{\alpha, 1-\alpha\} \leqslant z$.
	\end{theorem}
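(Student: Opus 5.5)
Since $f_1=\ket{f_1}\bra{f_1}$ is a pure state, every power $f_1^{s}$ with $s>0$ equals $f_1$. The functional therefore simplifies to
\begin{align*}
f_{\alpha,z}(f_1\|\rho)=\operatorname{Tr}\bigl(f_1\rho^{\frac{1-\alpha}{z}}f_1\bigr)^z=\bra{f_1}\rho^{\frac{1-\alpha}{z}}\ket{f_1}^{z},
\end{align*}
because $f_1\rho^{s}f_1=\bra{f_1}\rho^{s}\ket{f_1}f_1$ is rank one. Write $s=(1-\alpha)/z$. The hypothesis $z\geqslant 1-\alpha$ gives $s\in(0,1]$, and $z\geqslant\alpha$ gives $z\geqslant \alpha$, which I will use when invoking joint properties of $D_{\alpha,z}$. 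I will verify (T1)--(T3) using this closed form together with known facts about $f_{\alpha,z}$ in the stated parameter range.

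\textbf{(T1).} Since $t\mapsto t^{s}$ is concave on $[0,1]$ and $\rho$ has eigenvalues in $[0,1]$ summing to one, Jensen's inequality with the weights $|\langle f_1|e_k\rangle|^2$ (where $\ket{e_k}$ are eigenvectors of $\rho$ with eigenvalues $\lambda_k$) gives
\begin{align*}
\bra{f_1}\rho^{s}\ket{f_1}\leqslant\Bigl(\sum_k|\langle f_1|e_k\rangle|^2\lambda_k\Bigr)^{s}\leqslant 1.
\end{align*}
Raising to the power $z>0$ yields $f_{\alpha,z}\leqslant1$, so $\mathcal{T}^{\text{GR}}_{\alpha,z}(\rho)\geqslant0$. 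For $\rho=f_1$ we have $\bra{f_1}f_1^{s}\ket{f_1}=1$, hence $\mathcal{T}^{\text{GR}}_{\alpha,z}(f_1)=0$. Alternatively, one may appeal to the data processing inequality applied to the trace map.

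\textbf{(T2).} For $\alpha\in(0,1)$ and $\max\{\alpha,1-\alpha\}\leqslant z$, the functional $f_{\alpha,z}$ is monotonically \emph{increasing} under CPTP maps; this is the data processing inequality for the $\alpha$-$z$ R\'enyi relative entropy in this range (Zhang's resolution of the Audenaert--Datta conjecture), and the sign flips because $\alpha-1<0$. A texture-free operation $\Phi$ satisfies $\Phi(f_1)=f_1$, so
\begin{align*}
f_{\alpha,z}(f_1\|\Phi(\rho))=f_{\alpha,z}(\Phi(f_1)\|\Phi(\rho))\geqslant f_{\alpha,z}(f_1\|\rho),
\end{align*}
which gives $\mathcal{T}^{\text{GR}}_{\alpha,z}(\Phi(\rho))\leqslant\mathcal{T}^{\text{GR}}_{\alpha,z}(\rho)$. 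I will cite this DPI as known rather than reprove it; it is the deepest input.

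\textbf{(T3).} Convexity of $\mathcal{T}$ is equivalent to concavity of $\rho\mapsto\bra{f_1}\rho^{s}\ket{f_1}^{z}$. Here $X\mapsto X^{s}$ is operator concave for $s\in(0,1]$, so $\rho\mapsto g(\rho)=\bra{f_1}\rho^{s}\ket{f_1}$ is concave and nonnegative. The trouble is that $x\mapsto x^{z}$ is concave only when $z\leqslant1$. This is the main obstacle when $z>1$. For that case I would instead use the known joint concavity of $(\tau,\sigma)\mapsto\operatorname{Tr}(\sigma^{\frac{1-\alpha}{2z}}\tau^{\frac{\alpha}{z}}\sigma^{\frac{1-\alpha}{2z}})^z$ for $0<\alpha<1$, $z\geqslant\max\{\alpha,1-\alpha\}$ (Zhang, equivalent to the DPI above), specialized with $\tau$ fixed to $f_1$; this gives concavity in $\sigma=\rho$ directly. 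As a fallback for a self-contained argument, I would attempt to show that $\rho\mapsto\|\rho^{s/2}\ket{f_1}\|^{2z}$ is concave via a variational formula of the form
\begin{align*}
\operatorname{Tr}(\cdot)^z=\min_{H>0}\{\ldots\}.
\end{align*}
Such a formula expresses the functional as a minimum of affine functions in $\rho$, and hence as a concave function. The Araki--Lieb--Thirring inequality (Lemma 1) is the tool I expect to use to compare exponents there.
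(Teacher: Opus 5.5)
Your proposal is correct. For (T2), and for (T3) when $z>1$, it uses exactly the paper's inputs: monotonicity of $f_{\alpha,z}$ under CPTP maps (the DPI, combined with $\Phi(f_1)=f_1$), and joint concavity of $f_{\alpha,z}$ with the first argument frozen at $f_1$. You rightly credit the full range $z\geqslant\max\{\alpha,1-\alpha\}$ to Zhang's proof of the Audenaert--Datta conjecture.

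Where you genuinely differ is (T1). The paper applies Lemma 1 with $r=z/\alpha\geqslant1$ to bound $f_{\alpha,z}(f_1\|\rho)$ by the sandwiched quantity $f_{\alpha,\alpha}(f_1\|\rho)$, and then imports $D_{\alpha,\alpha}\geqslant0$. You instead use the rank-one closed form $\bra{f_1}\rho^{s}\ket{f_1}^{z}$, which the paper only derives after its propositions, and finish with Jensen. Your route is self-contained and uses $z\geqslant1-\alpha$ rather than $z\geqslant\alpha$. In fact neither is needed: for $s\geqslant1$, $\lambda^{s}\leqslant\lambda$ on $[0,1]$ gives $\bra{f_1}\rho^{s}\ket{f_1}\leqslant\bra{f_1}\rho\ket{f_1}\leqslant1$ directly.

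Similarly, your L\"owner--Heinz argument settles (T3) for $z\leqslant1$ with no deep input. The variational fallback is left unspecified and is not needed. If you want a self-contained substitute for $z>1$ that exploits the rank-one structure, write $\bra{f_1}\rho^{s}\ket{f_1}^{z}=\bigl(\bra{f_1}\rho^{s}\ket{f_1}^{1/s}\bigr)^{1-\alpha}$, which holds since $sz=1-\alpha$. Then invoke Epstein's theorem that $\rho\mapsto\operatorname{Tr}(K^{\dagger}\rho^{s}K)^{1/s}$ is concave for $s\in(0,1]$, here with $K=f_1$. Composing with the concave nondecreasing map $x\mapsto x^{1-\alpha}$ gives concavity in $\rho$ for every $z\geqslant1-\alpha$, without using $z\geqslant\alpha$ in (T3).
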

	\begin{proof}
		\textbf{(T1) Non-negativity:}
		From \cite{muller2013quantum}, we have $D_{\alpha,\alpha}(\tau||\sigma) \geqslant 0$, which implies $\log \operatorname{Tr}(\sigma^{\frac{1-\alpha}{2\alpha}}\tau^{\frac{\alpha}{\alpha}}\sigma^{\frac{1-\alpha}{2\alpha}})^\alpha \leqslant 0$, i.e., $\operatorname{Tr}(\sigma^{\frac{1-\alpha}{2\alpha}}\tau^{\frac{\alpha}{\alpha}}\sigma^{\frac{1-\alpha}{2\alpha}})^\alpha \leqslant 1$. Utilizing Lemma 1, we obtain:
		\begin{align*}
			f_{\alpha,z}(f_1||\rho)=\operatorname{Tr}(\rho^{\frac{1-\alpha}{2z}}f_1^{\frac{\alpha}{z}}\rho^{\frac{1-\alpha}{2z}})^z=\operatorname{Tr}(\rho^{\frac{1-\alpha}{2z}}f_1^{\frac{\alpha}{z}}\rho^{\frac{1-\alpha}{2z}})^{\frac{z}{\alpha}\alpha}\leqslant\operatorname{Tr}(\rho^{\frac{1-\alpha}{2\alpha}}f_1^{\frac{\alpha}{\alpha}}\rho^{\frac{1-\alpha}{2\alpha}})^\alpha\leqslant 1.
		\end{align*}
		Thus, $\mathcal{T}^{\text{GR}}_{\alpha,z}(\rho) \geqslant 0$. It is obvious that $\mathcal{T}^{\text{GR}}_{\alpha,z}(f_1) = 0$.
		\par\noindent
		\textbf{(T2) Monotonicity:} Since the $\alpha$-$z$ R\'enyi relative entropy satisfies the data processing inequality (DPI) \cite{audenaert2015alpha}, i.e., $D_{\alpha,z}(\tau||\sigma) \geqslant D_{\alpha,z}[\Phi(\tau)||\Phi(\sigma)]$, we have the following equivalent chain:
		\begin{align*}
			&\ D_{\alpha,z}(f_1||\rho) \geqslant D_{\alpha,z}[\Phi(f_1)||\Phi(\rho)] = D_{\alpha,z}[f_1||\Phi(\rho)] \\
			\Leftrightarrow &\ \log f_{\alpha,z}(f_1||\rho) \leqslant \log f_{\alpha,z}(f_1||\Phi(\rho)) \\
			\Leftrightarrow &\ f_{\alpha,z}(f_1||\rho) \leqslant f_{\alpha,z}(f_1||\Phi(\rho)).
		\end{align*}
		Therefore, $\mathcal{T}^{\text{GR}}_{\alpha,z}(\rho) \geqslant \mathcal{T}^{\text{GR}}_{\alpha,z}[\Phi(\rho)]$.
		\par\noindent
		\textbf{(T3) Convexity:} It suffices to consider the case $n=2$. Since $f_{\alpha,z}$ satisfies joint concavity \cite{audenaert2015alpha}, we have:
		\begin{align*}
			&\ p_1\mathcal{T}^{\text{GR}}_{\alpha,z}(\rho_1) + p_2\mathcal{T}^{\text{GR}}_{\alpha,z}(\rho_2) \\
			= &\ p_1 + p_2 - \big[ p_1 f_{\alpha,z}(f_1||\rho_1) + p_2 f_{\alpha,z}(f_1||\rho_2) \big] \\
			\geqslant &\ 1 - f_{\alpha,z}\big[ p_1 f_1 + p_2 f_1 \ || \ p_1 \rho_1 + p_2 \rho_2 \big] \\
			= &\ 1 - f_{\alpha,z}\big[ f_1 \ || \ p_1 \rho_1 + p_2 \rho_2 \big] \\
			= &\ \mathcal{T}^{\text{GR}}_{\alpha,z}(p_1 \rho_1 + p_2 \rho_2).
		\end{align*}
		Hence, $\mathcal{T}^{\text{GR}}_{\alpha,z}(\rho)$ satisfies convexity.
	\end{proof}
	\par
	Based on this measure, we further investigate its parameter dependence and properties under specific transformations. Concerning the two parameters $\alpha$ and $z$, $\mathcal{T}^{\text{GR}}_{\alpha,z}(\rho)$ possesses the following properties:
	\begin{proposition}
		For any density matrix $\rho$, the following holds:
		\par\noindent
		(1) If $\alpha_1 \leqslant \alpha_2$, then $\mathcal{T}^{\text{GR}}_{\alpha_1,\alpha_1}(\rho) \leqslant \mathcal{T}^{\text{GR}}_{\alpha_2,\alpha_2}(\rho)$;
		\par\noindent
		(2) If $z_1 \leqslant z_2$, then $\mathcal{T}^{\text{GR}}_{\alpha,z_1}(\rho) \leqslant \mathcal{T}^{\text{GR}}_{\alpha,z_2}(\rho)$.
	\end{proposition}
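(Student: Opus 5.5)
The plan is to first reduce $f_{\alpha,z}(f_1\|\rho)$ to a scalar expression by using that $f_1$ is a rank-one projector. For every $s>0$ we have $f_1^{s}=f_1$. Using the second form of $f_{\alpha,z}$ with $a=\frac{1-\alpha}{2z}$, the operator $\rho^{a}f_1\rho^{a}=\rho^{a}\ket{f_1}\bra{f_1}\rho^{a}$ has rank one. Its only nonzero eigenvalue is $\bra{f_1}\rho^{2a}\ket{f_1}$, so
\begin{align*}
f_{\alpha,z}(f_1\|\rho)=\bra{f_1}\rho^{\frac{1-\alpha}{z}}\ket{f_1}^{z}.
\end{align*}
Write $\rho=\sum_i\lambda_i\ket{e_i}\bra{e_i}$ and $\mu_i=|\langle f_1|e_i\rangle|^2$, a probability vector, and set $\mathbb{E}[g(\lambda)]=\sum_i\mu_i g(\lambda_i)$. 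Then $f_{\alpha,z}(f_1\|\rho)=\big(\mathbb{E}[\lambda^{(1-\alpha)/z}]\big)^{z}$. In terms of the power means $M_s=(\mathbb{E}\lambda^{s})^{1/s}$ this reads $f_{\alpha,z}(f_1\|\rho)=M_{(1-\alpha)/z}^{\,1-\alpha}$. Both items then become statements about power means of the eigenvalue distribution of $\rho$ weighted by $\mu$.

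\textbf{Item (2).} Put $c=1-\alpha>0$ and $s=c/z$, which decreases as $z$ increases. The power-mean inequality (equivalently Jensen or Lyapunov, or Lemma 1 applied in this commuting scalar situation) gives that $M_s$ is nondecreasing in $s$. Since the exponent $c$ is fixed, $f_{\alpha,z}(f_1\|\rho)=M_{c/z}^{c}$ is nonincreasing in $z$. Hence $z_1\le z_2$ implies $f_{\alpha,z_1}\ge f_{\alpha,z_2}$, and therefore $\mathcal{T}^{\rm GR}_{\alpha,z_1}(\rho)\le\mathcal{T}^{\rm GR}_{\alpha,z_2}(\rho)$. This part is routine.

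\textbf{Item (1).} Here $z=\alpha$, so with $p=(1-\alpha)/\alpha$ we get $f_{\alpha,\alpha}(f_1\|\rho)=(\mathbb{E}\lambda^{p})^{\alpha}=M_p^{\,1-\alpha}$. Increasing $\alpha$ has two competing effects. It lowers $p$, which lowers $M_p\le1$ and so tends to decrease $f$. It also lowers the exponent $1-\alpha$, which tends to increase $f$. I would differentiate $h(\alpha)=\alpha\log\mathbb{E}\lambda^{(1-\alpha)/\alpha}$ and try to show $h'(\alpha)\le0$ on the admissible range $\alpha\ge\tfrac12$ imposed by $\max\{\alpha,1-\alpha\}\le z$. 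The derivative is
\begin{align*}
h'(\alpha)=\log\mathbb{E}\lambda^{p}-\frac{1}{\alpha}\,\frac{\mathbb{E}[\lambda^{p}\log\lambda]}{\mathbb{E}\lambda^{p}}.
\end{align*}
The sign of $h'$ requires comparing a log-moment with an entropy-like term. This comparison is the main obstacle.

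Before investing in that estimate, I would test the maximally mixed state $\rho=I/d$. There $h(\alpha)=(\alpha-1)\log d$, which is increasing in $\alpha$, so $f_{\alpha,\alpha}(f_1\|\rho)$ increases and $\mathcal{T}^{\rm GR}_{\alpha,\alpha}(\rho)$ decreases with $\alpha$. That is the reverse of the inequality claimed in item (1). So I expect the derivative argument to fail in general, and item (1) as stated to be false. The most I would aim to prove is item (1) under an additional hypothesis making the log-moment term dominate, for instance $\mathbb{E}[\lambda^{p}\log\lambda]\ge\alpha\,\mathbb{E}\lambda^{p}\log\mathbb{E}\lambda^{p}$ on $[\tfrac12,1)$, or else the reversed monotonicity in $\alpha$. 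I would state that explicitly rather than claim the general monotonicity.
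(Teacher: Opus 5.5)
Your counterexample for item (1) is correct: item (1) is false as stated, and the paper's proof of it is wrong. For $\rho=I/d$ ($d\geqslant2$) one gets $\mathcal{T}^{\text{GR}}_{\alpha,\alpha}(I/d)=1-d^{\alpha-1}$. This is strictly decreasing on the whole admissible range $\alpha\in[\tfrac{1}{2},1)$, which is forced by $\max\{\alpha,1-\alpha\}\leqslant z=\alpha$. For $d=2$, $\mathcal{T}^{\text{GR}}_{1/2,1/2}=1-2^{-1/2}\approx0.29$ exceeds $\mathcal{T}^{\text{GR}}_{3/4,3/4}=1-2^{-1/4}\approx0.16$.

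The paper argues that because $D_{\alpha,\alpha}(f_1\|\rho)$ is nondecreasing in $\alpha$ and $\alpha-1<0$, $f_{\alpha,\alpha}(f_1\|\rho)$ must be nonincreasing. That would follow only if the prefactor were fixed. In fact $\log f_{\alpha,\alpha}=-(1-\alpha)D_{\alpha,\alpha}$, and raising $\alpha$ increases $D_{\alpha,\alpha}$ while decreasing $1-\alpha$, so the product can move either way. For $I/d$, $D_{\alpha,\alpha}(f_1\|I/d)=\log d$ for every $\alpha$, so only the shrinking prefactor acts. This is exactly the second of the two competing effects you isolated in $M_p^{\,1-\alpha}$.

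Drop the fallback of proving the reversed monotonicity, because it fails as well. Take a pure state $\rho=\ket{\psi}\bra{\psi}$ with $0<F=|\langle f_1|\psi\rangle|^2<1$. Then $\rho^{p}=\rho$ for $p>0$, so $\mathcal{T}^{\text{GR}}_{\alpha,\alpha}(\rho)=1-F^{\alpha}$, which is strictly increasing in $\alpha$. Hence $\alpha\mapsto\mathcal{T}^{\text{GR}}_{\alpha,\alpha}(\rho)$ has no state-independent direction. Your suggested extra hypothesis is just $h'(\alpha)\leqslant0$ rewritten, not a usable criterion.

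What does hold uniformly is visible in your notation. The quantity $f_{\alpha,\alpha}(f_1\|\rho)^{1/(1-\alpha)}=M_{(1-\alpha)/\alpha}$ is a decreasing function of $D_{\alpha,\alpha}(f_1\|\rho)$, and it is nonincreasing in $\alpha$ by the power-mean inequality. So the correct replacement for item (1) is that $1-\bigl(1-\mathcal{T}^{\text{GR}}_{\alpha,\alpha}(\rho)\bigr)^{1/(1-\alpha)}$ is nondecreasing in $\alpha$. In the paper's notation for the sandwiched R\'enyi measure, this quantity is $(1-\alpha)\mathcal{T}^{\text{R}}_{\alpha}(\rho)$.

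Your proof of item (2) is correct and is essentially the paper's argument in scalar form. The paper applies Lemma 1 directly to $\operatorname{Tr}\bigl(f_1^{\alpha/(2z_2)}\rho^{(1-\alpha)/z_2}f_1^{\alpha/(2z_2)}\bigr)^{z_2}$ with $r=z_2/z_1$ and $q=z_1$. You first collapse the rank-one expression to $\bigl(\bra{f_1}\rho^{(1-\alpha)/z}\ket{f_1}\bigr)^{z}=M_{(1-\alpha)/z}^{\,1-\alpha}$ and then use the power-mean inequality. Because $f_1$ is a rank-one projector, Lemma 1 here is exactly Jensen's inequality $\bigl(\bra{f_1}B\ket{f_1}\bigr)^{r}\leqslant\bra{f_1}B^{r}\ket{f_1}$ for $r\geqslant1$, so both proofs rest on the same inequality. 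Your route is more elementary and exposes the full parameter dependence, which is what let you see that item (1) fails. The paper's operator form has the advantage of holding verbatim with an arbitrary mixed state in place of $f_1$.
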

	\begin{proof}
		(1) Since $D_{\alpha,\alpha}(f_1||\rho)$ is a monotonically increasing function of $\alpha$ \cite{muller2013quantum}, and given $\alpha-1 < 0$ along with the monotonicity of the exponential function, it follows that $f_{\alpha_1,\alpha_1}(f_1||\rho) \geqslant f_{\alpha_2,\alpha_2}(f_1||\rho)$. Thus, the conclusion holds.
		\par\noindent
		(2) Since $\frac{z_2}{z_1} \geqslant 1$, applying Lemma 1 yields:
		\begin{align*}
			f_{\alpha,z_2}(f_1||\rho) &= \operatorname{Tr}(f_1^{\frac{\alpha}{2z_2}}\rho^{\frac{1-\alpha}{z_2}}f_1^{\frac{\alpha}{2z_2}})^{z_2} = \operatorname{Tr}(f_1^{\frac{\alpha}{2z_2}}\rho^{\frac{1-\alpha}{z_2}}f_1^{\frac{\alpha}{2z_2}})^{\frac{z_2}{z_1} \cdot z_1} \\
			&\leqslant \operatorname{Tr}(f_1^{\frac{\alpha}{2z_1}}\rho^{\frac{1-\alpha}{z_1}}f_1^{\frac{\alpha}{2z_1}})^{z_1} = f_{\alpha,z_1}(f_1||\rho).
		\end{align*}
		Consequently, $\mathcal{T}^{\text{GR}}_{\alpha,z_1}(\rho) \leqslant \mathcal{T}^{\text{GR}}_{\alpha,z_2}(\rho)$ holds.
	\end{proof}
	Furthermore, from the axiomatic properties of the $\alpha$-$z$ R\'enyi relative entropy, we obtain:
	\begin{proposition}
		In a $d$-dimensional system, if a unitary operation satisfies $U f_1 U^{\dagger} = f_1$, then $\mathcal{T}^{\text{GR}}_{\alpha,z}(U\rho U^{\dagger}) = \mathcal{T}^{\text{GR}}_{\alpha,z}(\rho)$.
	\end{proposition}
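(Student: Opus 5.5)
The plan is to prove unitary invariance of $f_{\alpha,z}$ directly from the functional calculus. Then I will use the fixed-point condition $Uf_1U^\dagger=f_1$ to replace the first argument. There are two natural routes. One is to note that $\Phi(\cdot)=U\cdot U^\dagger$ and its inverse $\Phi^{-1}(\cdot)=U^\dagger\cdot U$ are both texture-free operations. The second fact holds because $Uf_1U^\dagger=f_1$ implies $U^\dagger f_1 U=f_1$. Applying the monotonicity (T2) established in Theorem 1 in both directions would then give $\mathcal{T}^{\text{GR}}_{\alpha,z}(U\rho U^\dagger)\leqslant\mathcal{T}^{\text{GR}}_{\alpha,z}(\rho)$ and the reverse inequality, hence equality. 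I would present this short argument as the main proof, and add the direct computation below as a check.

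For the direct computation, the key step is that for any positive semidefinite $A$ and any real power $p$, $(UAU^\dagger)^p=UA^pU^\dagger$. Here $p$ is nonnegative, so no inverses on the support are needed. This follows from the spectral decomposition $A=\sum_k\lambda_k\ket{a_k}\bra{a_k}$, since $UAU^\dagger=\sum_k\lambda_k U\ket{a_k}\bra{a_k}U^\dagger$ is a spectral decomposition with the same eigenvalues. With this, I compute
\begin{align*}
f_{\alpha,z}(f_1||U\rho U^\dagger)
&=\operatorname{Tr}\big((Uf_1U^\dagger)^{\frac{\alpha}{2z}}(U\rho U^\dagger)^{\frac{1-\alpha}{z}}(Uf_1U^\dagger)^{\frac{\alpha}{2z}}\big)^z\\
&=\operatorname{Tr}\big(U f_1^{\frac{\alpha}{2z}}\rho^{\frac{1-\alpha}{z}}f_1^{\frac{\alpha}{2z}}U^\dagger\big)^z
=\operatorname{Tr}\,U\big(f_1^{\frac{\alpha}{2z}}\rho^{\frac{1-\alpha}{z}}f_1^{\frac{\alpha}{2z}}\big)^zU^\dagger
=f_{\alpha,z}(f_1||\rho).
\end{align*}
The first equality uses $f_1=Uf_1U^\dagger$. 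The second applies the power rule to each factor, after which the interior $U^\dagger U$ factors cancel. The third applies the power rule once more, with exponent $z$. The last uses cyclicity of the trace. Subtracting from $1$ gives the claim.

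There is no real obstacle. The only point needing care is justifying $(UAU^\dagger)^p=UA^pU^\dagger$ for non-integer $p$, which the spectral argument handles. I would also verify that $U^\dagger$ fixes $f_1$, which is immediate from multiplying $Uf_1U^\dagger=f_1$ on the left by $U^\dagger$ and on the right by $U$.
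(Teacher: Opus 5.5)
Your proposal is correct, and your main argument differs from the paper's. The paper cites unitary invariance of $D_{\alpha,z}$ from Audenaert--Datta, substitutes $Uf_1U^\dagger=f_1$ into the first argument, and converts back to $f_{\alpha,z}$. Your secondary computation is a self-contained proof of that cited fact: the power rule $(UAU^\dagger)^p=UA^pU^\dagger$ via spectral decomposition, followed by trace cyclicity. It also works with $f_{\alpha,z}$ directly, so it avoids the logarithm and the $+\infty$ value of $D_{\alpha,z}$ when $f_{\alpha,z}(f_1||\rho)=0$.

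Your primary argument instead notes that $U\cdot U^\dagger$ and $U^\dagger\cdot U$ are both texture-free operations and applies (T2) in both directions. This is the general resource-theoretic principle that any monotone is invariant under reversible free operations. It therefore proves the proposition for every texture measure satisfying (T2), not only $\mathcal{T}^{\text{GR}}_{\alpha,z}$.

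The trade-off is that this route inherits its validity from (T2), hence from the data-processing inequality for $D_{\alpha,z}$. That inequality is a deep result and is only available when $\max\{\alpha,1-\alpha\}\leqslant z$. The direct computation, like the paper's cited invariance, needs only functional calculus and holds for all $\alpha\in(0,1)$ and $z>0$. Since Theorem 1 already restricts to the DPI range, both of your routes are valid here.
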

	\begin{proof}
		Since the $\alpha$-$z$ R\'enyi relative entropy satisfies unitary invariance \cite{audenaert2015alpha}, we have:
		\begin{align*}
			&\ D_{\alpha,z}(f_1||\rho) = D_{\alpha,z}[U f_1 U^{\dagger} || U\rho U^{\dagger}] = D_{\alpha,z}[f_1 || U\rho U^{\dagger}] 
			\\
			\Leftrightarrow &\ f_{\alpha,z}(f_1||\rho) = f_{\alpha,z}(f_1 || U\rho U^{\dagger}).
		\end{align*}
		Thus, the conclusion follows.
	\end{proof}
	\begin{proposition}
		For any density matrices $\rho \in \mathcal{D}(\mathcal{H}_{1})$ and $\delta \in \mathcal{D}(\mathcal{H}_{2})$, the following inequality holds:
		$\mathcal{T}^{\text{GR}}_{\alpha,z}(\rho) + \mathcal{T}^{\text{GR}}_{\alpha,z}(\delta) \geqslant \mathcal{T}^{\text{GR}}_{\alpha,z}(\rho \otimes \delta) \geqslant \mathcal{T}^{\text{GR}}_{\alpha,z}(\rho) \mathcal{T}^{\text{GR}}_{\alpha,z}(\delta)$.
	\end{proposition}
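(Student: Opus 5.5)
The key observation is that the texture-free state of the composite system $\mathcal{H}_1\otimes\mathcal{H}_2$ (with product computational basis) factorizes. If $\ket{f_1^{(1)}}$ and $\ket{f_1^{(2)}}$ are the uniform superpositions in dimensions $d_1$ and $d_2$, then
\begin{align*}
\ket{f_1^{(1)}}\otimes\ket{f_1^{(2)}}=\frac{1}{\sqrt{d_1d_2}}\sum_{j,k}\ket{j}\ket{k},
\end{align*}
which is the free state of the $d_1d_2$-dimensional system. Hence $f_1=f_1^{(1)}\otimes f_1^{(2)}$.

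Next I will show that $f_{\alpha,z}$ is multiplicative on tensor products. Write $a=f_{\alpha,z}(f_1^{(1)}||\rho)$ and $b=f_{\alpha,z}(f_1^{(2)}||\delta)$. Powers of positive semidefinite matrices respect tensor products, $(X\otimes Y)^s=X^s\otimes Y^s$. Therefore
\begin{align*}
(f_1^{(1)}\otimes f_1^{(2)})^{\frac{\alpha}{2z}}(\rho\otimes\delta)^{\frac{1-\alpha}{z}}(f_1^{(1)}\otimes f_1^{(2)})^{\frac{\alpha}{2z}}=M_1\otimes M_2,
\end{align*}
where $M_1=(f_1^{(1)})^{\frac{\alpha}{2z}}\rho^{\frac{1-\alpha}{z}}(f_1^{(1)})^{\frac{\alpha}{2z}}$ and $M_2$ is defined analogously. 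Both $M_1$ and $M_2$ are positive semidefinite. Using $(M_1\otimes M_2)^z=M_1^z\otimes M_2^z$ and $\operatorname{Tr}(A\otimes B)=\operatorname{Tr}A\operatorname{Tr}B$, I get $f_{\alpha,z}(f_1||\rho\otimes\delta)=ab$. It follows that $\mathcal{T}^{\text{GR}}_{\alpha,z}(\rho\otimes\delta)=1-ab$.

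From the proof of (T1) in Theorem 1, $a,b\in[0,1]$. Both inequalities now reduce to elementary algebra:
\begin{itemize}
\item The upper bound $(1-a)+(1-b)\geqslant 1-ab$ is equivalent to $(1-a)(1-b)\geqslant 0$, which holds.
\item The lower bound $1-ab\geqslant(1-a)(1-b)$ is equivalent to $a+b-2ab\geqslant 0$, i.e.\ $a(1-b)+b(1-a)\geqslant 0$, which also holds.
\end{itemize}

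The main point requiring care is the identification $f_1=f_1^{(1)}\otimes f_1^{(2)}$. It presupposes that the computational basis of the composite system is the product basis; I will state this convention explicitly. Everything else is routine tensor-power bookkeeping, which is valid because all matrices involved are positive semidefinite.
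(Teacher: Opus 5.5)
Your proof is correct and follows the paper's route: factorize the free state, obtain $f_{\alpha,z}(f_1||\rho\otimes\delta)=ab$, and finish with the same elementary algebra using $a,b\in[0,1]$. The only difference is that you verify multiplicativity directly via $(X\otimes Y)^s=X^s\otimes Y^s$, where the paper cites additivity of $D_{\alpha,z}$ and exponentiates; your version has the small bonus of remaining valid when $a$ or $b$ vanishes, where the paper's logarithms are formally $-\infty$.
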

	\begin{proof}
		Let $f_1^{\rho}$ and $f_1^{\delta}$ denote the free states in $\mathcal{D}(\mathcal{H}_{1})$ and $\mathcal{D}(\mathcal{H}_{2})$, respectively. Then $f_1^{\rho\otimes\delta} := f_1^{\rho} \otimes f_1^{\delta}$ is the free state in the tensor product system. Since the $\alpha$-$z$ R\'enyi relative entropy satisfies additivity \cite{audenaert2015alpha}, we have:
		\begin{align*}
			&\ D_{\alpha,z}(f_1^{\rho\otimes\delta} || \rho \otimes \delta) = D_{\alpha,z}(f_1^{\rho} || \rho) + D_{\alpha,z}(f_1^{\delta} || \delta) \\
			\Leftrightarrow &\ \log f_{\alpha,z}(f_1^{\rho\otimes\delta} || \rho \otimes \delta) = \log f_{\alpha,z}(f_1^{\rho} || \rho) + \log f_{\alpha,z}(f_1^{\delta} || \delta) \\
			\Leftrightarrow &\ f_{\alpha,z}(f_1^{\rho\otimes\delta} || \rho \otimes \delta) = f_{\alpha,z}(f_1^{\rho} || \rho) f_{\alpha,z}(f_1^{\delta} || \delta).
		\end{align*}
		Given that $f_{\alpha,z}(f_1^{\rho} || \rho) \leqslant 1$ and $f_{\alpha,z}(f_1^{\delta} || \delta) \leqslant 1$, it follows that:
		\begin{align*}
			[1 - f_{\alpha,z}(f_1^{\rho} || \rho)] + [1 - f_{\alpha,z}(f_1^{\delta} || \delta)] &\geqslant 1 - f_{\alpha,z}(f_1^{\rho} || \rho) f_{\alpha,z}(f_1^{\delta} || \delta) \\
			&\geqslant [1 - f_{\alpha,z}(f_1^{\rho} || \rho)][1 - f_{\alpha,z}(f_1^{\delta} || \delta)].
		\end{align*}
		Therefore, we conclude:
		\begin{align*}
			\mathcal{T}^{\text{GR}}_{\alpha,z}(\rho) + \mathcal{T}^{\text{GR}}_{\alpha,z}(\delta) \geqslant \mathcal{T}^{\text{GR}}_{\alpha,z}(\rho \otimes \delta) \geqslant \mathcal{T}^{\text{GR}}_{\alpha,z}(\rho) \mathcal{T}^{\text{GR}}_{\alpha,z}(\delta).
		\end{align*}
	\end{proof}
	\par
	Since
	\begin{align*}
		\operatorname{Tr}(f_1^{\frac{\alpha}{2z}}\rho^{\frac{1-\alpha}{z}}f_1^{\frac{\alpha}{2z}})^z
		&= \operatorname{Tr}(f_1 \rho^{\frac{1-\alpha}{z}} f_1)^z \\
		&= \operatorname{Tr}(\bra{f_1}\rho^{\frac{1-\alpha}{z}}\ket{f_1} \cdot \ket{f_1}\bra{f_1})^z \\
		&= (\bra{f_1}\rho^{\frac{1-\alpha}{z}}\ket{f_1})^z \cdot \operatorname{Tr}(\ket{f_1}\bra{f_1}) \\
		&= (\bra{f_1}\rho^{\frac{1-\alpha}{z}}\ket{f_1})^z,
	\end{align*}
	the texture measure $\mathcal{T}^{\text{GR}}_{\alpha,z}$ defined via the $\alpha$-$z$ R\'enyi relative entropy can be equivalently written as:
	\begin{align*}
		\mathcal{T}^{\text{GR}}_{\alpha,z}(\rho) = 1 - (\bra{f_1}\rho^{\frac{1-\alpha}{z}}\ket{f_1})^z.
	\end{align*}
	\par
	This formulation allows $\mathcal{T}^{\text{GR}}_{\alpha,z}(\rho)$ to be connected with some known texture measures. For instance, choosing $\alpha = z = 0.5$, the expression $2\mathcal{T}^{\text{GR}}_{\alpha,z}(\rho)$ becomes equivalent to the texture measure based on the Bures distance \cite{wang2025quantifying}. Choosing $\alpha = 1-\mu$ and $z=1$, the expression $\frac{1}{1-\mu}\mathcal{T}^{\text{GR}}_{\alpha,z}(\rho)$ becomes equivalent to the texture measure based on Tsallis relative entropy \cite{zhang2025quantum}.
	
	\section{Relationships Between Texture Measures}
	\par\noindent\par 
	This section first reviews several established valid texture measures within the texture resource theory, and subsequently establishes mathematical relationships between them.
	\par
	The state rugosity texture measure, introduced by Parisio \cite{parisio2024quantum}, is defined as:
	\begin{align*}
		\mathcal{T}_{\text{SR}}(\rho)=-\ln(\bra{f_1}\rho\ket{f_1}).
	\end{align*}
	\par
	The texture measure based on the trace norm $\norm{A}_{\text{Tr}}=\operatorname{Tr}\sqrt{A^{\dagger}A}$ \cite{wang2025quantifying} is defined as:
	\begin{align*}
		\mathcal{T}_{\operatorname{Tr}}(\rho)=\frac{1}{2}\norm{\rho-f_1}_{\operatorname{Tr}}.
	\end{align*}
	\par
	The texture measure based on the fidelity $\mathcal{F}(\rho,f_1)=\operatorname{Tr}\sqrt{\sqrt{\rho}f_1\sqrt{\rho}}$ \cite{wang2025quantifying} is defined as:
	\begin{align*}
		\mathcal{T}_{\text{F}}(\rho)=1-\bra{f_1}\rho\ket{f_1}.
	\end{align*}
	\par
	The texture measure based on weight \cite{zhang2025quantum} is defined as:
	\begin{align*}
		\mathcal{T}_{\text{w}}(\rho)=\min_{s}\{s\geqslant0:\rho=(1-s)f_1+s\tau,\ \tau\in\mathcal{D}(\mathcal{H})\}.
	\end{align*}
	\par
	The texture measure based on the sandwiched R\'enyi relative entropy \cite{cao2026generalized} is defined as:
	\begin{align*}
		\mathcal{T}^{\text{R}}_{\alpha}(\rho)=\frac{1}{1-\alpha}\left[1- \big(\bra{f_1}\rho^{\frac{1-\alpha}{\alpha}}\ket{f_1}\big)^{\frac{\alpha}{1-\alpha}}\right] ,\quad \alpha\in[\frac{1}{2},1).
	\end{align*}
	\par
	Based on the above definitions, we establish the following series of inequalities to clarify the relative magnitudes and connections between different measures.
	\begin{proposition}
		For any density matrix $\rho$, $\mathcal{T}_{\text{F}}(\rho)\leqslant\mathcal{T}_{\text{SR}}(\rho)$ holds.
	\end{proposition}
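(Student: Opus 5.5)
The plan is to reduce both measures to a single real variable and then apply an elementary scalar inequality. Set $x=\bra{f_1}\rho\ket{f_1}$. Since $\rho$ is positive semidefinite with unit trace and $\ket{f_1}$ is a unit vector, $x$ lies in $[0,1]$. Indeed, $x\geqslant 0$ by positivity, and
\[
x\leqslant \lambda_{\max}(\rho)\leqslant \operatorname{Tr}\rho=1 .
\]
In terms of $x$, the two measures read $\mathcal{T}_{\text{F}}(\rho)=1-x$ and $\mathcal{T}_{\text{SR}}(\rho)=-\ln x$. The boundary case $x=0$ needs a convention: there $\mathcal{T}_{\text{SR}}(\rho)=+\infty$ while $\mathcal{T}_{\text{F}}(\rho)=1$, so the inequality holds trivially. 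We may therefore assume $x\in(0,1]$.

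The claim is then equivalent to the scalar inequality $1-x\leqslant -\ln x$, i.e. $\ln x\leqslant x-1$ for $x>0$. This is the standard tangent-line bound for the concave function $\ln$ at the point $x=1$. To verify it directly, consider
\[
g(x)=x-1-\ln x, \qquad g'(x)=1-\frac{1}{x}.
\]
The derivative is negative on $(0,1)$ and vanishes at $x=1$. Hence $g$ is decreasing on $(0,1]$ and attains its minimum $g(1)=0$, so $g\geqslant 0$ on $(0,1]$. This gives $\mathcal{T}_{\text{F}}(\rho)\leqslant\mathcal{T}_{\text{SR}}(\rho)$. Equality holds exactly when $x=1$, that is, when $\rho=f_1$.

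There is no real obstacle in this argument. The only points that need care are confirming that $x\in[0,1]$ and handling $x=0$, where the state rugosity diverges.
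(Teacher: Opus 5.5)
Your proof is correct and is essentially the paper's argument. The paper writes $\mathcal{T}_{\text{SR}}(\rho)=-\ln(1-\mathcal{T}_{\text{F}}(\rho))$ and invokes $-\ln(1-x)\geqslant x$, which is your bound $\ln x\leqslant x-1$ after the substitution $x\mapsto 1-x$; your explicit handling of $\bra{f_1}\rho\ket{f_1}=0$, where the rugosity diverges and the paper silently assumes $\mathcal{T}_{\text{F}}(\rho)<1$, is a small gain in care.
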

	\begin{proof}
		From the definitions, we have $\mathcal{T}_{\text{SR}}(\rho)=-\ln(1-\mathcal{T}_{\text{F}}(\rho))$. Using the inequality:
		\begin{align*}
			-\ln(1-x)\geqslant x, \quad x\in[0,1),
		\end{align*}
		we obtain $\mathcal{T}_{\text{SR}}(\rho) \geqslant \mathcal{T}_{\text{F}}(\rho)$, concluding the proof.
	\end{proof}
	\begin{proposition}
		For any density matrix $\rho$, $1-\sqrt{1-\mathcal{T}_{\text{F}}(\rho)}\leqslant\mathcal{T}_{\operatorname{Tr}}(\rho)\leqslant\sqrt{\mathcal{T}_{\text{F}}(\rho)}$ holds.
	\end{proposition}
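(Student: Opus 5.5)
The plan is to recognize both bounds as the Fuchs--van de Graaf inequalities specialized to the case where one argument is the pure state $f_1$. Write $F:=\bra{f_1}\rho\ket{f_1}=1-\mathcal{T}_{\text{F}}(\rho)$. For a pure state, the fidelity used in the definition satisfies $\mathcal{F}(\rho,f_1)=\operatorname{Tr}\sqrt{\sqrt{\rho}\ket{f_1}\bra{f_1}\sqrt{\rho}}=\sqrt{\bra{f_1}\rho\ket{f_1}}=\sqrt{F}$, since $\sqrt{\rho}\ket{f_1}\bra{f_1}\sqrt{\rho}$ is rank one. The claim then reads $1-\sqrt{F}\leqslant \frac12\norm{\rho-f_1}_{\operatorname{Tr}}\leqslant\sqrt{1-F}$.

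\textbf{Upper bound.} I would avoid citing Fuchs--van de Graaf and prove it directly. First take $\rho=\ket{\psi}\bra{\psi}$ pure. Then $\rho-f_1$ is a rank-two traceless Hermitian operator with eigenvalues $\pm\lambda$. From $\operatorname{Tr}(\rho-f_1)^2=2-2|\langle\psi|f_1\rangle|^2=2\lambda^2$ we get $\frac12\norm{\rho-f_1}_{\operatorname{Tr}}=\lambda=\sqrt{1-F}$. For mixed $\rho=\sum_k p_k\ket{\psi_k}\bra{\psi_k}$, convexity of the trace norm gives $\frac12\norm{\rho-f_1}_{\operatorname{Tr}}\leqslant\sum_k p_k\sqrt{1-F_k}$, where $F_k=|\langle\psi_k|f_1\rangle|^2$. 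Concavity of $x\mapsto\sqrt{1-x}$ (Jensen) bounds this by $\sqrt{1-\sum_k p_kF_k}=\sqrt{1-F}$. This yields $\mathcal{T}_{\operatorname{Tr}}(\rho)\leqslant\sqrt{\mathcal{T}_{\text{F}}(\rho)}$.

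\textbf{Lower bound.} Use the variational form $\frac12\norm{\rho-f_1}_{\operatorname{Tr}}\geqslant\operatorname{Tr}[P(f_1-\rho)]$ for any projector $P$. Choosing $P=f_1$ gives $\mathcal{T}_{\operatorname{Tr}}(\rho)\geqslant 1-F$. Since $F\in[0,1]$, we have $F\leqslant\sqrt{F}$, so $1-F\geqslant 1-\sqrt{F}=1-\sqrt{1-\mathcal{T}_{\text{F}}(\rho)}$, which is the stated lower bound (and is in fact a stronger intermediate inequality).

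The only mildly delicate step is the pure-state eigenvalue computation together with the Jensen step in the upper bound. Everything else is elementary, so I expect no real obstacle.
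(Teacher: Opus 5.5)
Your proof is correct, but it takes a different route from the paper. The paper quotes the general Fuchs--van de Graaf inequality $1-\frac12\norm{\rho-\sigma}_{\operatorname{Tr}}\leqslant\mathcal{F}(\rho,\sigma)\leqslant\sqrt{1-\frac14\norm{\rho-\sigma}_{\operatorname{Tr}}^2}$ with $\sigma=f_1$ and $\mathcal{F}(\rho,f_1)=\sqrt{1-\mathcal{T}_{\text{F}}(\rho)}$, and reads off both bounds in one line.

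You instead prove both bounds from scratch, using only the fact that $f_1$ is pure. For the upper bound you combine the exact pure-state computation with the triangle inequality and Jensen's inequality. This takes the place of the purification (Uhlmann) step that the general mixed--mixed inequality needs, and it works only because one argument is pure. For the lower bound you use the variational form of the trace distance with the test projector $P=f_1$.

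The paper's route is shorter and holds for arbitrary pairs of states. Yours is self-contained, and it gives the sharper intermediate bound $\mathcal{T}_{\operatorname{Tr}}(\rho)\geqslant\mathcal{T}_{\text{F}}(\rho)$. Since $1-\sqrt{1-x}<x$ for $x\in(0,1)$, this strictly improves the stated lower bound. The improved bound is attained: for $\rho=(1-s)f_1+s\tau$ with $\tau f_1=0$, both sides equal $s$. Your pure-state computation also shows equality in the upper bound for pure $\rho$. So your argument actually yields the sandwich $\mathcal{T}_{\text{F}}(\rho)\leqslant\mathcal{T}_{\operatorname{Tr}}(\rho)\leqslant\sqrt{\mathcal{T}_{\text{F}}(\rho)}$, which is tight on both sides. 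The paper's proposition misses this.
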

	\begin{proof}
		The Fuchs-van de Graaf inequality relate fidelity and trace distance \cite{zhang2016lower}:
		\begin{align*}
			1-\frac{1}{2}\norm{\rho-\sigma}_{\operatorname{Tr}}\leqslant\mathcal{F}(\rho,\sigma)\leqslant\sqrt{1-\frac{1}{4}\norm{\rho-\sigma}_{\operatorname{Tr}}^2}.
		\end{align*}
		The conclusion follows directly.
	\end{proof}
	\begin{proposition}
		For any density matrix $\rho$, $\mathcal{T}_{\text{F}}(\rho)\leqslant\mathcal{T}_{\text{w}}(\rho)$ holds.
	\end{proposition}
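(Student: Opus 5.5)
The plan is to unpack the definition of $\mathcal{T}_{\text{w}}$ directly: take any feasible decomposition and read off a bound on $\bra{f_1}\rho\ket{f_1}$. Let $s\geqslant 0$ and $\tau\in\mathcal{D}(\mathcal{H})$ be such that $\rho=(1-s)f_1+s\tau$. Since the trace of both sides is $1$, no constraint is lost, and since $f_1$ and $\tau$ are states this forces $s\in[0,1]$ in any case where the decomposition is meaningful. (If $s>1$ were allowed, the optimal $s$ is still at most $1$, because $s=1$, $\tau=\rho$ is always feasible. So the minimum lies in $[0,1]$.) The minimum is attained by compactness of $\mathcal{D}(\mathcal{H})$ and closedness of the constraint set, so I may work with an optimal pair $(s^*,\tau^*)$ with $s^*=\mathcal{T}_{\text{w}}(\rho)$.

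Next I sandwich the decomposition between $\bra{f_1}$ and $\ket{f_1}$:
\begin{align*}
\bra{f_1}\rho\ket{f_1}=(1-s^*)\bra{f_1}f_1\ket{f_1}+s^*\bra{f_1}\tau^*\ket{f_1}=(1-s^*)+s^*\bra{f_1}\tau^*\ket{f_1}.
\end{align*}
Because $\tau^*\geqslant 0$, the last term is nonnegative, so $\bra{f_1}\rho\ket{f_1}\geqslant 1-s^*$. Rearranging gives $\mathcal{T}_{\text{F}}(\rho)=1-\bra{f_1}\rho\ket{f_1}\leqslant s^*=\mathcal{T}_{\text{w}}(\rho)$, which is the claim.

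There is essentially no hard step. The only point requiring care is confirming that the infimum in the definition of $\mathcal{T}_{\text{w}}$ is attained, or else arguing with an $\varepsilon$-optimal $s$ and letting $\varepsilon\to 0$. Either works, and the $\varepsilon$ version avoids any compactness discussion. I would also note in passing that the argument in fact gives the sharper bound $\mathcal{T}_{\text{F}}(\rho)\leqslant s^*\,(1-\bra{f_1}\tau^*\ket{f_1})$, though this is not needed for the statement.
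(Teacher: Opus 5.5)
Your proof is correct and is essentially the paper's argument: sandwich a feasible decomposition $\rho=(1-s)f_1+s\tau$ between $\bra{f_1}$ and $\ket{f_1}$, then use $\bra{f_1}\tau\ket{f_1}\geqslant 0$ to get $\mathcal{T}_{\text{F}}(\rho)\leqslant s$. The paper applies this to every feasible $s$ (including $s>1$, which is always feasible via $\tau=(\rho+(s-1)f_1)/s$, so your remark that $s$ is forced into $[0,1]$ is inaccurate but harmless), and this passes directly to the minimum without any attainment or $\varepsilon$ argument.
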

	\begin{proof}
		Choose $s_0\geqslant 0$ and $\tau_0\in\mathcal{D}(H)$ such that the decomposition $\rho=(1-s_0)f_1+s_0\tau_0$ holds. Then, we compute:
		\begin{align*}
			\bra{f_1}\rho\ket{f_1}
			&=\bra{f_1}(1-s_0)f_1+s_0\tau_0\ket{f_1} \\
			&=(1-s_0)\bra{f_1}f_1\ket{f_1}+s_0\bra{f_1}\tau_0\ket{f_1} \\
			&=1-s_0+s_0\bra{f_1}\tau_0\ket{f_1}.
		\end{align*}
		Since $\bra{f_1}\tau_0\ket{f_1}\geqslant 0$, we have $\bra{f_1}\rho\ket{f_1}\geqslant 1-s_0$, which implies:
		\begin{align*}
			\mathcal{T}_{\text{F}}(\rho)\leqslant s_0.
		\end{align*}
		This inequality holds for every $s$ in all feasible decompositions satisfying $\rho=(1-s)f_1+s\tau$. By the definition of $\mathcal{T}_{\text{w}}(\rho)$, we conclude:
		\begin{align*}
			\mathcal{T}_{\text{F}}(\rho)\leqslant\mathcal{T}_{\text{w}}(\rho).
		\end{align*}
	\end{proof}
	\begin{proposition}
		For any density matrix $\rho$, $\mathcal{T}_{\text{F}}(\rho)\leqslant(1-\alpha)\mathcal{T}^{\text{R}}_{\alpha}(\rho)$ holds.
	\end{proposition}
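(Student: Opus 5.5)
The plan is to use the closed form of $\mathcal{T}^{\text{R}}_{\alpha}$ and reduce the claim to a scalar concavity (Jensen) inequality in the eigenbasis of $\rho$. Set $p=\frac{1-\alpha}{\alpha}$. Since $\alpha\in[\frac{1}{2},1)$, we have $p\in(0,1]$ and $\frac{\alpha}{1-\alpha}=\frac{1}{p}$. Then
\begin{align*}
(1-\alpha)\mathcal{T}^{\text{R}}_{\alpha}(\rho)=1-\big(\bra{f_1}\rho^{p}\ket{f_1}\big)^{1/p},
\end{align*}
while $\mathcal{T}_{\text{F}}(\rho)=1-\bra{f_1}\rho\ket{f_1}$. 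The statement is therefore equivalent to
\begin{align*}
\big(\bra{f_1}\rho^{p}\ket{f_1}\big)^{1/p}\leqslant \bra{f_1}\rho\ket{f_1}.
\end{align*}
Because $t\mapsto t^{p}$ is increasing on $[0,\infty)$, this is in turn equivalent to $\bra{f_1}\rho^{p}\ket{f_1}\leqslant \bra{f_1}\rho\ket{f_1}^{p}$.

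To prove this, write the spectral decomposition $\rho=\sum_i\lambda_i\ket{e_i}\bra{e_i}$ with $\lambda_i\geqslant0$. Put $w_i=|\langle f_1|e_i\rangle|^2$; these satisfy $w_i\geqslant 0$ and $\sum_i w_i=\langle f_1|f_1\rangle=1$. Then
\begin{align*}
\bra{f_1}\rho^{p}\ket{f_1}=\sum_i w_i\lambda_i^{p},\qquad \bra{f_1}\rho\ket{f_1}=\sum_i w_i\lambda_i.
\end{align*}
Since $t\mapsto t^{p}$ is concave on $[0,\infty)$ for $p\in(0,1]$, Jensen's inequality with the probability weights $\{w_i\}$ gives $\sum_i w_i\lambda_i^{p}\leqslant\big(\sum_i w_i\lambda_i\big)^{p}$. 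This is exactly the required inequality. Raising both sides to the power $1/p\geqslant1$ preserves the order, which yields the claim.

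The only points needing care are minor. Zero eigenvalues are harmless, since $0^{p}=0$ and the function $\rho^{p}$ is defined spectrally on the support. The endpoint $\alpha=\frac{1}{2}$ gives $p=1$, where equality holds. The main step is simply recognizing that the vector $\ket{f_1}$ induces a probability distribution on the spectrum of $\rho$. No earlier result of the paper is needed beyond the stated closed forms of the two measures.
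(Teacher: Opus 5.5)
Your proof is correct, but it reaches the key inequality with a different tool than the paper uses. The paper never diagonalizes $\rho$. It writes $\big(\bra{f_1}\rho^{\frac{1-\alpha}{\alpha}}\ket{f_1}\big)^{\frac{\alpha}{1-\alpha}}=\operatorname{Tr}\big(f_1\rho^{\frac{1-\alpha}{\alpha}}f_1\big)^{\frac{\alpha}{1-\alpha}}$ and applies the Araki--Lieb--Thirring inequality (Lemma 1) with $A=f_1$, $B=\rho^{\frac{1-\alpha}{\alpha}}$, $r=\frac{\alpha}{1-\alpha}\geqslant 1$ and $q=1$. It then uses $f_1^{r}=f_1$ (rank-one projector) to arrive at $\operatorname{Tr}(f_1\rho f_1)=\bra{f_1}\rho\ket{f_1}$.

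Because $f_1$ has rank one, that matrix inequality collapses to exactly the scalar statement $\bra{f_1}\rho^{p}\ket{f_1}\leqslant\bra{f_1}\rho\ket{f_1}^{p}$. You prove this directly by Jensen's inequality for the concave map $t\mapsto t^{p}$ with weights $w_i=|\langle f_1|e_i\rangle|^2$.

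The paper's route is a one-line citation that keeps its toolkit uniform, since the same lemma drives the non-negativity part of Theorem 1 and the parameter-monotonicity results. Yours is self-contained and more transparent. It shows that $\big(\bra{f_1}\rho^{p}\ket{f_1}\big)^{1/p}$ is the weighted power mean of the spectrum of $\rho$ under $\{w_i\}$, so the claim is just that a power mean of order $p\leqslant 1$ is at most the arithmetic mean.

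That viewpoint gives two things for free. First, it yields the key step of the paper's next proposition (monotonicity of $\mathcal{T}^{\text{R}}_{\alpha}$ in $\alpha$), via monotonicity of power means in $p$. Second, strict concavity for $p<1$ gives the equality case: equality holds iff $\ket{f_1}$ is an eigenvector of $\rho$.
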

	\begin{proof}
		For $\alpha\in[\frac{1}{2},1)$, we have $\frac{\alpha}{1-\alpha}\geqslant 1$. Using Lemma 1 and the result from \cite{cao2026generalized}, we get:
		\begin{align*}
			\big(\bra{f_1}\rho^{\frac{1-\alpha}{\alpha}}\ket{f_1}\big)^{\frac{\alpha}{1-\alpha}}
			&= \operatorname{Tr}\big(f_1\rho^{\frac{1-\alpha}{\alpha}}f_1\big)^{\frac{\alpha}{1-\alpha}} \\
			&\leqslant \operatorname{Tr}\big(f_1^{\frac{\alpha}{1-\alpha}}\rho f_1^{\frac{\alpha}{1-\alpha}}\big) \\
			&= \operatorname{Tr}\big(f_1\rho f_1\big) = \bra{f_1}\rho\ket{f_1}.
		\end{align*}
		Thus, we have:
		\begin{align*}
			1-\big(\bra{f_1}\rho^{\frac{1-\alpha}{\alpha}}\ket{f_1}\big)^{\frac{\alpha}{1-\alpha}}\geqslant 1-\bra{f_1}\rho\ket{f_1}.
		\end{align*}
		Multiplying both sides by the positive factor $\frac{1}{1-\alpha}$ yields $(1-\alpha)\mathcal{T}^{\text{R}}_{\alpha}(\rho)\geqslant\mathcal{T}_{\text{F}}(\rho)$.
	\end{proof}
	\begin{proposition}
		For any density matrix $\rho$ and two parameters $\alpha_1\geqslant\alpha_2$, $\mathcal{T}^{\text{R}}_{\alpha_1}(\rho)\geqslant\mathcal{T}^{\text{R}}_{\alpha_2}(\rho)$ holds.
	\end{proposition}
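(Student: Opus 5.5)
The plan is to reparametrize, show that each of the two factors in $\mathcal{T}^{\text{R}}_{\alpha}(\rho)$ is nonnegative and nondecreasing in $\alpha$, and conclude that their product is nondecreasing.

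\textbf{Reparametrization.} Set $\beta=\frac{1-\alpha}{\alpha}$. As $\alpha$ ranges over $[\frac12,1)$, $\beta$ ranges over $(0,1]$, and $\beta$ is strictly decreasing in $\alpha$. Since $\frac{1}{1-\alpha}=\frac{1+\beta}{\beta}=1+\frac1\beta$, we can write
\begin{align*}
\mathcal{T}^{\text{R}}_{\alpha}(\rho)=\Big(1+\frac{1}{\beta}\Big)\big[1-g(\beta)\big],\qquad g(\beta):=\big(\bra{f_1}\rho^{\beta}\ket{f_1}\big)^{1/\beta}.
\end{align*}
Take $\alpha_1\geqslant\alpha_2$, so that $\beta_1\leqslant\beta_2$. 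It then suffices to prove the following two facts:
\begin{itemize}
\item[(a)] $0\leqslant 1-g(\beta)$ for every $\beta\in(0,1]$;
\item[(b)] $g$ is nondecreasing in $\beta$, i.e.\ $g(\beta_1)\leqslant g(\beta_2)$.
\end{itemize}
Given these, $1-g(\beta_1)\geqslant 1-g(\beta_2)\geqslant 0$. Also $1+\frac{1}{\beta_1}\geqslant 1+\frac{1}{\beta_2}>0$. Multiplying these two inequalities between nonnegative quantities gives $\mathcal{T}^{\text{R}}_{\alpha_1}(\rho)\geqslant\mathcal{T}^{\text{R}}_{\alpha_2}(\rho)$.

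\textbf{Step (b): monotonicity of $g$ via Lemma 1.} Apply Lemma 1 with $A=f_1$, $B=\rho^{\beta_1}$, $r=\beta_2/\beta_1\geqslant 1$ and $q=1$:
\begin{align*}
\operatorname{Tr}\big(f_1\rho^{\beta_1}f_1\big)^{\beta_2/\beta_1}\leqslant \operatorname{Tr}\big(f_1\rho^{\beta_2}f_1\big).
\end{align*}
Because $f_1$ is a rank-one projector, $f_1Bf_1=\bra{f_1}B\ket{f_1}f_1$. Both traces therefore reduce to scalars, exactly as in the computation preceding Section III. This turns the inequality into $\big(\bra{f_1}\rho^{\beta_1}\ket{f_1}\big)^{\beta_2/\beta_1}\leqslant\bra{f_1}\rho^{\beta_2}\ket{f_1}$. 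Raising both sides to the power $1/\beta_2>0$ gives $g(\beta_1)\leqslant g(\beta_2)$.

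As a sanity check, write $\rho=\sum_i\lambda_i\ket{e_i}\bra{e_i}$ and $p_i=|\langle f_1|e_i\rangle|^2$. Then $g(\beta)=\big(\sum_i p_i\lambda_i^{\beta}\big)^{1/\beta}$ is the power mean of order $\beta$ of the eigenvalues with weights $p_i$. Its monotonicity in $\beta$ is the classical power-mean inequality, which agrees with the Lemma 1 argument.

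\textbf{Step (a): the bound $g\leqslant 1$.} By step (b), $g(\beta)\leqslant g(1)=\bra{f_1}\rho\ket{f_1}\leqslant 1$. This gives $1-g(\beta)\geqslant 0$.

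\textbf{Main obstacle.} The only delicate point is the bookkeeping of directions. Increasing $\alpha$ decreases $\beta$, and we must check that both factors move the same way and are nonnegative, so that the product of two inequalities is legitimate. The nonnegativity comes from step (a), and it is needed for the multiplication to preserve the inequality. Once this is organized as above, no further estimates are needed.
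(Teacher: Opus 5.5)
Your proposal is correct and follows essentially the same route as the paper, which applies Lemma 1 with $A=f_1$, $B=\rho^{(1-\alpha_1)/\alpha_1}$ and $r=\frac{\alpha_1(1-\alpha_2)}{\alpha_2(1-\alpha_1)}$ to compare the bracketed terms (taking $q=\frac{\alpha_2}{1-\alpha_2}$ directly rather than your $q=1$ followed by a scalar power), and then uses $\frac{1}{1-\alpha_1}\geqslant\frac{1}{1-\alpha_2}$. Your step (a) usefully makes explicit the nonnegativity $1-g(\beta)\geqslant 0$, which the paper uses only tacitly when combining the two inequalities.
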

	\begin{proof}
		When $\alpha_1\geqslant\alpha_2$, we have $\frac{\alpha_1(1-\alpha_2)}{\alpha_2(1-\alpha_1)}\geqslant 1$. Applying Lemma 1 yields:
		\begin{align*}
			\operatorname{Tr}\big(f_1\rho^{\frac{1-\alpha_2}{\alpha_2}}f_1\big)^{\frac{\alpha_2}{1-\alpha_2}}
			&= \operatorname{Tr}\big(f_1\rho^{\frac{1-\alpha_1}{\alpha_1}\cdot\frac{\alpha_1(1-\alpha_2)}{\alpha_2(1-\alpha_1)}}f_1\big)^{\frac{\alpha_2}{1-\alpha_2}} \\
			&\geqslant \operatorname{Tr}\big(f_1\rho^{\frac{1-\alpha_1}{\alpha_1}}f_1\big)^{\frac{\alpha_2}{1-\alpha_2}\cdot\frac{\alpha_1(1-\alpha_2)}{\alpha_2(1-\alpha_1)}} \\
			&= \operatorname{Tr}\big(f_1\rho^{\frac{1-\alpha_1}{\alpha_1}}f_1\big)^{\frac{\alpha_1}{1-\alpha_1}}.
		\end{align*}
		Therefore,
		\begin{align*}
			1-\big(\bra{f_1}\rho^{\frac{1-\alpha_2}{\alpha_2}}\ket{f_1}\big)^{\frac{\alpha_2}{1-\alpha_2}}
			\leqslant 1-\big(\bra{f_1}\rho^{\frac{1-\alpha_1}{\alpha_1}}\ket{f_1}\big)^{\frac{\alpha_1}{1-\alpha_1}}.
		\end{align*}
		Furthermore, noting that $\frac{1}{1-\alpha_1} \geqslant \frac{1}{1-\alpha_2}$ for $\alpha_1 \geqslant \alpha_2$, we conclude $\mathcal{T}^{\text{R}}_{\alpha_1}(\rho)\geqslant\mathcal{T}^{\text{R}}_{\alpha_2}(\rho)$.
	\end{proof}
	
	\section{Texture Witnesses}
	\par\noindent\par  
	Beyond quantification, the experimental detection of texture resources represents a promising direction. In quantum resource theory, a witness is an effective tool for experimentally detecting and verifying specific quantum resources, widely applied in areas such as entanglement \cite{Horodecki1996separability,Horodecki2009Quantum}, coherence \cite{ma2021detecting}, block coherence \cite{chen2023detecting}, and imaginarity \cite{zhang2025onim}. For instance, in coherence resource theory, coherence witnesses can determine whether a state possesses coherence. Inspired by this, we can define texture witnesses within the texture resource theory to probe whether a quantum state possesses texture as a resource.
	\begin{definition}
		A Hermitian operator $W$ is called a \textbf{texture witness} if and only if it satisfies:
		\begin{eqnarray*}
			\operatorname{Tr}(W f_1) \geq 0.
		\end{eqnarray*}
		If a texture witness $W$ further satisfies that there exists at least one texture state $\rho$ such that $\operatorname{Tr}(W\rho) < 0$, then it is called a \textbf{strict texture witness}.
	\end{definition}
	The broader condition $\operatorname{Tr}(W f_1) \geq 0$ defines the complete set of Hermitian operators that are, at minimum, do not yield a false positive detection on the free state (they never falsely indicate texture in $f_1$). Within this set, the strict texture witnesses are the operational subset capable of conclusively detecting the resource. By this definition, there exist trivial texture witnesses. The free state $f_1$ itself, viewed as an operator, is a texture witness since $\operatorname{Tr}(f_1 f_1) = 1 \geq 0$. However, for any quantum state $\rho$, we have $\operatorname{Tr}(f_1 \rho) = \langle f_1|\rho|f_1\rangle \geq 0$. Therefore, $W = f_1$ is not a strict texture witness.
	\par 
	The above definition implies that if measuring a strict texture witness $W$ on a state $\rho$ yields a negative expectation value ($\operatorname{Tr}(W\rho) < 0$), one can conclude that $\rho$ necessarily contains texture resources. Conversely, a non-negative result is inconclusive (the state could be free or a texture state not detected by this particular witness). 
	\par
	Analogous to the general construction method for coherence witnesses, a universal construction also exists for texture witnesses:
	\begin{theorem}
		For any Hermitian operator $A$, the operator
		\begin{align*}
			W = \Delta_T(A) - A = \bra{f_1} A \ket{f_1} f_1 - A
		\end{align*}
		is a texture witness, where $\Delta_T$ is the detexturing operation: $\Delta_T(X) = \operatorname{Tr}(X f_1) f_1$.
		\par 
		Furthermore, if $A$ satisfies $\bra{f_1} A \ket{f_1} f_1 \neq A$, then the constructed $W$ is a strict texture witness.
	\end{theorem}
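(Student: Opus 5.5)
The plan is to verify the two defining conditions of Definition 1 directly, using only that $f_1$ is a rank-one projector and that $A$ is Hermitian. Throughout, write $a:=\bra{f_1}A\ket{f_1}$.

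\textbf{Step 1: $W$ is Hermitian.} Since $A=A^\dagger$, the number $a$ is real. As $f_1$ is a Hermitian projector, $W=a f_1-A$ is Hermitian.

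\textbf{Step 2: $W$ is a texture witness.} Using $\operatorname{Tr}(f_1f_1)=1$ and $\operatorname{Tr}(Af_1)=a$, I compute
\begin{align*}
\operatorname{Tr}(Wf_1)=a\operatorname{Tr}(f_1f_1)-\operatorname{Tr}(Af_1)=a-a=0\geqslant 0 .
\end{align*}
Equivalently, $\Delta_T$ is idempotent and $\operatorname{Tr}(\Delta_T(A)f_1)=\operatorname{Tr}(Af_1)$. This settles the first claim for every Hermitian $A$.

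\textbf{Step 3: strictness (the main obstacle).} I need a state $\rho$ with $\operatorname{Tr}(W\rho)<0$. Any such $\rho$ is automatically a texture state, because $\operatorname{Tr}(Wf_1)=0$ forces $\rho\neq f_1$. So the task reduces to showing that $W$ has a negative eigenvalue.

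Here the stated hypothesis $a f_1\neq A$ is not enough. Take $A=-I$ with $d\geqslant 2$. Then $a=-1$ and $af_1=-f_1\neq -I$, yet
\begin{align*}
W=I-f_1\geqslant 0 ,
\end{align*}
so $\operatorname{Tr}(W\rho)\geqslant 0$ for every state $\rho$. I would therefore prove strictness under a corrected hypothesis.

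\emph{Exact condition.} $W$ is a strict texture witness if and only if $A\not\leqslant af_1$, i.e.\ $af_1-A$ is not positive semidefinite. In that case, take $\rho=\ket{\psi}\bra{\psi}$ with $\ket{\psi}$ an eigenvector of $W$ for a negative eigenvalue.

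\emph{Checkable sufficient condition.} Let $P=I-f_1$. The condition is $PAf_1\neq 0$, i.e.\ $A\ket{f_1}$ is not parallel to $\ket{f_1}$. The argument runs as follows.
\begin{itemize}
\item Set $\ket{\phi}=PA\ket{f_1}/\|PA\ket{f_1}\|$, a unit vector orthogonal to $\ket{f_1}$.
\item Restricting $W$ to $\operatorname{span}\{\ket{f_1},\ket{\phi}\}$ gives a $2\times2$ Hermitian matrix. Its $(1,1)$ entry is $\bra{f_1}W\ket{f_1}=0$, and its off-diagonal entry is $-\|PA\ket{f_1}\|\neq 0$.
\item A $2\times2$ Hermitian matrix with a zero diagonal entry and a nonzero off-diagonal entry has negative determinant, hence a negative eigenvalue.
\item The corresponding eigenvector $\ket{\psi}$, a combination of $\ket{f_1}$ and $\ket{\phi}$, gives $\operatorname{Tr}(W\ket{\psi}\bra{\psi})<0$.
\end{itemize}

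The remaining case $PAf_1=0$ but $A\neq af_1$ means $A=af_1+PAP$. Then $W=-PAP$, and $W$ is strict exactly when $PAP$ has a positive eigenvalue on $P\mathcal{H}$. I would state the theorem's second part in this corrected form and include the counterexample $A=-I$ to justify the change.
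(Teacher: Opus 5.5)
Your proposal is correct, and on strictness it is more careful than the paper.

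\textbf{The statement is false as written.} Your Steps 1--2 match the paper's computation $\operatorname{Tr}(Wf_1)=0$. Your counterexample is valid: for $A=-I$ and $d\geqslant 2$ the hypothesis $\bra{f_1}A\ket{f_1}f_1\neq A$ holds, yet $W=I-f_1$ is positive semidefinite. In fact $\operatorname{Tr}(W\rho)=\mathcal{T}_{\text{F}}(\rho)\geqslant 0$ for every state, so the second part of the theorem fails.

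\textbf{Where the paper's proof breaks.} The paper sets $B=A-\bra{f_1}A\ket{f_1}f_1$. It then claims that $\sum_i\lambda_i|\langle f_1|\psi_i\rangle|^2=0$ together with $B\neq 0$ forces a positive eigenvalue. That inference fails when $\ket{f_1}\in\ker B$ and the nonzero spectrum of $B$ is negative. This is exactly your example, where $B=-(I-f_1)$.

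The inference is valid precisely when some term $\lambda_i|\langle f_1|\psi_i\rangle|^2$ is nonzero. That happens exactly when $B\ket{f_1}\neq 0$, and since $B\ket{f_1}=PA\ket{f_1}$, this is your sufficient condition.

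\textbf{Comparison of routes in that case.} You and the repaired paper argument prove the same fact by different means. The paper's route takes an eigenvector of $B$ with positive eigenvalue, which requires a spectral decomposition. Your $2\times 2$ compression onto $\operatorname{span}\{\ket{f_1},PA\ket{f_1}\}$ exhibits a violating state directly. There the negative determinant $-\|PA\ket{f_1}\|^2$ does all the work.

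\textbf{The missing case.} Your complementary case is the piece the paper's proof misses entirely. If $B\ket{f_1}=0$, then $W=-PAP$, and strictness is equivalent to $PAP$ having a positive eigenvalue on $P\mathcal{H}$. This is consistent with the paper's Example 1, since $A=I$ gives $PAP=P$.

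Your corrected formulation is correct and complete: $W$ is strict if and only if it is not positive semidefinite, and this splits into the two checkable cases above.
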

	
	\begin{proof}
		First, we verify that $W$ satisfies Condition 1 of Definition 1. Compute $\operatorname{Tr}(W f_1)$:
		\begin{align*}
			\operatorname{Tr}(W f_1)
			&= \operatorname{Tr}\bigl( \bra{f_1} A \ket{f_1} f_1^2 - A f_1 \bigr) \\
			&= \bra{f_1} A \ket{f_1} \operatorname{Tr}(f_1) - \bra{f_1} A \ket{f_1} \\
			&= \bra{f_1} A \ket{f_1} - \bra{f_1} A \ket{f_1} \\
			&= 0 \geqslant 0.
		\end{align*}
		Thus, for any Hermitian operator $A$, $W$ is a texture witness.
		\par 
		Next, we prove the strictness condition. Assume $\bra{f_1} A \ket{f_1} f_1 \neq A$. Define the deviation operator
		\begin{align*}
			B = A - \bra{f_1} A \ket{f_1} f_1,
		\end{align*}
		so that $W = -B$. Note that $B \neq 0$ is Hermitian and satisfies $\bra{f_1} B \ket{f_1} = 0$.
		\par
		Consider the spectral decomposition of $B$:
		\begin{align*}
			B = \sum_i \lambda_i \ket{\psi_i}\bra{\psi_i},
		\end{align*}
		where $\lambda_i$ are real eigenvalues and $\{\ket{\psi_i}\}$ forms an orthonormal basis. The condition $\bra{f_1} B \ket{f_1} = 0$ implies
		\begin{align*}
			\sum_i \lambda_i |\bra{f_1}\ket{\psi_i}|^2= 0.
		\end{align*}
		Since $B \neq 0$, not all eigenvalues are zero, and the above equality forces $B$ to have at least one positive eigenvalue. Let $\lambda_{\max} > 0$ be a positive eigenvalue of $B$ with corresponding eigenstate $\ket{\psi_{\max}}$. Now take the texture state $\rho = \ket{\psi_{\max}}\bra{\psi_{\max}}$. Its expectation value is
		\begin{align*}
			\operatorname{Tr}(W\rho) = -\bra{\psi_{\max}} B \ket{\psi_{\max}} = -\lambda_{\max} < 0.
		\end{align*}
		Moreover, $\rho$ is indeed a texture state because if $\ket{\psi_{\max}} = e^{i\phi}\ket{f_1}$, then
		\begin{align*}
			\lambda_{\max} = \bra{f_1} B \ket{f_1} = 0,
		\end{align*}
		contradicting $\lambda_{\max} > 0$. Therefore, $W$ is a strict texture witness.
	\end{proof}
	The universal construction above provides a template for generating (strict) texture witnesses. By choosing different Hermitian operators $A$, we can obtain witnesses with different detection characteristics. Here is an example of a texture witness:
	\begin{example} Strict texture witness $W_1=f_1-I$.
		\par
		Let $A_1 = I$, and substitute it into the universal construction formula:
		\begin{align*}
			W_1 = \bra{f_1} I \ket{f_1} f_1 - I = 1 \cdot f_1 - I = f_1 - I.
		\end{align*}
		Then $W_1 = f_1 - I$ is a texture witness. For the free state $f_1$, we have:
		\begin{align*}
			\operatorname{Tr}(W_1 f_1) = \bra{f_1} f_1 \ket{f_1} - \operatorname{Tr}(f_1) = 1 - 1 = 0.
		\end{align*}
		For any texture state $\rho$, since:
		\begin{align*}
			\operatorname{Tr}(W_1 \rho)=\operatorname{Tr}(f_1\rho) - \operatorname{Tr}(\rho) = \bra{f_1}\rho\ket{f_1} - 1 = - \mathcal{T}_{\text{F}}(\rho),
		\end{align*}
		where $\mathcal{T}_{\text{F}}(\rho)$ is the fidelity-based measure. Therefore, any texture state $\rho$ always satisfies $\operatorname{Tr}(W_1 \rho) < 0$. This means the texture witness $W_1$ can detect all texture states, and its negative expectation value directly gives the fidelity measure of texture.
	\end{example}
	\par
	It is important to note that the set of texture witnesses is broader than the set generated by the universal construction; the universal construction yields only a proper subset. We can present texture witnesses that are not constructed via Theorem 2. To this end, we first introduce a generator $G$ associated with the texture-free state $f_1$:
	\begin{definition}
		Define the generator $G$ as:
		\begin{align*}
			G = 2\ket{f_1}\bra{f_1} - I = 2f_1 - I.
		\end{align*}
		This generator satisfies $G^2 = I$, with eigenvalues $+1$ (corresponding to eigenstate $\ket{f_1}$) and $-1$ (corresponding to all states orthogonal to $\ket{f_1}$).
	\end{definition}
	This leads to the following example:
	\begin{example} Strict texture witness $W_{\theta} = \cos\theta\cdot I+\sin\theta\cdot G, \theta\in\left( \frac{\pi}{4},\frac{3\pi}{4}\right] $.
		\par
		Let $W_{\theta}$ denote the family of operators $\cos\theta\cdot I+\sin\theta\cdot G$, that is, $W_{\theta}= (\cos\theta-\sin\theta)I+2\sin(\theta)f_1$. For $W_{\theta}$ to be a valid texture witness, it must satisfy the two conditions in the definition. For Condition 1, we have:
		\begin{align*}
			\operatorname{Tr}(W_{\theta} f_1) = (\cos\theta-\sin\theta)\operatorname{Tr}(f_1)+2\sin(\theta)\operatorname{Tr}(f_1)=\cos\theta+\sin\theta.
		\end{align*}
		Thus, $\theta$ must satisfy $\cos\theta+\sin\theta\geqslant 0$, yielding $\theta\in\left[0,\frac{3\pi}{4}\right]\cup\left[ \frac{7\pi}{4},2\pi\right]$.
		\par
		For Condition 2, there must exist at least one texture state $\rho$ with $\bra{f_1}\rho\ket{f_1}\in[0,1]$ such that $\operatorname{Tr}(W_{\theta}\rho)<0$. Let $x=\bra{f_1}\rho\ket{f_1}$, then we require:
		\begin{align*}
			g(x)=\cos\theta-\sin\theta+2\sin\theta\cdot x<0.
		\end{align*}
		When $\sin\theta\geqslant 0$, $g(x)$ is monotonically increasing. To satisfy existence, it suffices that $g(0)=\cos\theta-\sin\theta<0$, giving $\theta\in\left( \frac{\pi}{4},\pi\right]$. When $\sin\theta<0$, $g(x)$ is monotonically decreasing. To satisfy existence, it suffices that $g(1)=\cos\theta+\sin\theta<0$, giving $\theta\in\left( \pi,\frac{7\pi}{4}\right)$, which contradicts Condition 1.
		\par
		In summary, $W_{\theta}$ is a valid strict texture witness when $\theta\in\left( \frac{\pi}{4},\frac{3\pi}{4}\right]$.
		\par
		Furthermore, $W_{\theta}$ has a direct relationship with the fidelity-based texture measure $\mathcal{T}_{\text{F}}(\rho)$:
		\begin{align*}
			\mathcal{T}_{\text{F}}(\rho)=\frac{\cos\theta+\sin\theta-\operatorname{Tr}(W_{\theta}\rho)}{2\sin\theta}.
		\end{align*}
		This represents a selective detection scheme. The texture witness $W_{\theta}$ yields a negative result if and only if the fidelity-based texture measure exceeds a specific threshold:
		\begin{align*}
			\operatorname{Tr}(W_{\theta}\rho)<0
			\Leftrightarrow
			\mathcal{T}_{\text{F}}(\rho)>\frac{\cos\theta+\sin\theta}{2\sin\theta}.
		\end{align*}
		The parameter $\theta$ adjusts the value of this threshold. By tuning $\theta$, we can alter the sensitivity of the witness, adapting it to different detection requirements. For instance, choosing $\theta=\frac{\pi}{2}$ gives $W_{\pi/2} = G = 2f_1 - I$. In this case, the texture witness is sensitive only to states $\rho$ whose squared fidelity with the free state $f_1$ is below 50\% (i.e., $[\mathcal{F}(\rho,f_1)]^2=\big(\operatorname{Tr}\sqrt{\sqrt{\rho}f_1\sqrt{\rho}}\big)^2<0.5$).
		\par
		We also point out that there exists no Hermitian operator $A$ that can generate $W_{\pi/2}$ via the universal construction formula $W = \Delta_T(A) - A$. Suppose such an $A_{\pi/2}$ existed, satisfying:
		\begin{align*}
			\bra{f_1} A_{\pi/2} \ket{f_1} f_1 - A_{\pi/2} = 2f_1 - I,
		\end{align*}
		which implies:
		\begin{align*}
			A_{\pi/2}=\bra{f_1} A_{\pi/2} \ket{f_1} f_1 - 2f_1 + I.
		\end{align*}
		Taking the expectation value with $f_1$ then yields:
		\begin{align*}
			\bra{f_1} A_{\pi/2} \ket{f_1} = \big(\bra{f_1} A_{\pi/2} \ket{f_1}-2\big) + 1 = \bra{f_1} A_{\pi/2} \ket{f_1} - 1,
		\end{align*}
		a contradiction. Therefore, no Hermitian operator $A$ can generate $W_{\pi/2}$, demonstrating that the set of texture witnesses is indeed broader than the set from the universal construction.
	\end{example}
	Similarly, we can directly construct the following texture witness without using Theorem 1:
	\begin{example} Strict texture witness $W^{jk}_{\varphi}$.
		\par
		Take any pair of distinct basis states $\ket{j}$ and $\ket{k}$ ($j \neq k$) from the computational basis $\{\ket{j}\}_{j=1}^d$, along with an arbitrary phase parameter $\varphi\in(0,2\pi)$. Construct the following Hermitian operator:
		\begin{align*}
			W^{jk}_{\varphi} = \frac{2\cos\varphi}{d} I - \left( e^{i\varphi} \ket{j}\bra{k} + e^{-i\varphi} \ket{k}\bra{j} \right).
		\end{align*}
		For $f_1$, the expectation value of $W^{jk}_{\varphi}$ is:
		\begin{align*}
			\operatorname{Tr}(W^{jk}_{\varphi} f_1)
			&= \operatorname{Tr}\left( \frac{2\cos\varphi}{d} I f_1 \right) - e^{i\varphi}\operatorname{Tr}(\ket{j}\bra{k} f_1) - e^{-i\varphi}\operatorname{Tr}(\ket{k}\bra{j} f_1) \\
			&= \frac{2\cos\varphi}{d} - e^{i\varphi} \bra{k} f_1 \ket{j} - e^{-i\varphi} \bra{j} f_1 \ket{k} \\
			&= \frac{2\cos\varphi}{d} - e^{i\varphi} \cdot \frac{1}{d} - e^{-i\varphi} \cdot \frac{1}{d} \\
			&= \frac{1}{d} \big[2\cos\varphi - (e^{i\varphi} + e^{-i\varphi})\big] \\
			&= 0 \ge 0.
		\end{align*}
		This satisfies Condition 1.
		\par
		Consider the pure state $\rho=\ket{\psi}\bra{\psi}$, where $\ket{\psi} = \frac{1}{\sqrt{2}}\big(\ket{j} + e^{-i\varphi} \ket{k}\big)$. Its off-diagonal element is $\rho_{jk} = \bra{j}\rho\ket{k} = \frac{1}{2}e^{i\varphi}$. For $\varphi\in(0,2\pi)$, the expectation value of the witness for this state is:
		\begin{align*}
			\operatorname{Tr}(W^{jk}_{\varphi}\rho)
			&= \operatorname{Tr}\left( \frac{2\cos\varphi}{d} I \rho \right) - e^{i\varphi}\operatorname{Tr}(\ket{j}\bra{k} \rho) - e^{-i\varphi}\operatorname{Tr}(\ket{k}\bra{j} \rho) \\
			&= \frac{2\cos\varphi}{d} - e^{i\varphi} \bra{k} \rho \ket{j} - e^{-i\varphi} \bra{j}\rho\ket{k} \\
			&= \frac{2\cos\varphi}{d} - e^{i\varphi} \rho_{jk}^{\dagger} - e^{-i\varphi}\rho_{jk} \\
			&= \frac{2\cos\varphi}{d} - 2\Re{e^{-i\varphi}\rho_{jk}} \\
			&= \frac{2\cos\varphi}{d}-1<0,\quad (\text{Note that: }d>1).
		\end{align*}
		In conclusion, $W^{jk}_{\varphi}$ is a strict texture witness.
		\par
		Particularly, in the imaginarity resource theory, the matrix elements of an imaginarity-free state are all real. The above texture witness can be used to detect imaginarity resources. Let:
		\begin{align*}
			W^{jk}_{I+}=W^{jk}_{\pi/2},\quad W^{jk}_{I-}=W^{jk}_{3\pi/2}.
		\end{align*}
		Then, for any quantum state $\sigma$, their respective expectation values are:
		\begin{align*}
			\operatorname{Tr}(W^{jk}_{I+}\sigma)&=- e^{i{\frac{\pi}{2}}} \bra{k} \sigma \ket{j} - e^{-i{\frac{\pi}{2}}} \bra{j}\sigma\ket{k} =-2\Im{\sigma_{jk}},\\
			\operatorname{Tr}(W^{jk}_{I-}\sigma)&=- e^{i{\frac{3\pi}{2}}} \bra{k} \sigma \ket{j} - e^{-i{\frac{3\pi}{2}}} \bra{j}\sigma\ket{k} =2\Im{\sigma_{jk}}.
		\end{align*}
		Thus, we have:
		\begin{align*}
			\operatorname{Tr}(W^{jk}_{I+}\sigma)<0 &\Leftrightarrow \Im{\sigma_{jk}}>0,\\
			\operatorname{Tr}(W^{jk}_{I-}\sigma)<0 &\Leftrightarrow \Im{\sigma_{jk}}<0,
		\end{align*}
		showing that the texture witness $W^{jk}_{I+}$ specifically detects positive imaginary parts, while $W^{jk}_{I-}$ detects negative imaginary parts.
	\end{example}
	In summary, texture witnesses provide a feasible scheme for the experimental detection of texture resources. By measuring the expectation value of specific Hermitian operators, we can determine whether a quantum state possesses texture. Moreover, the measurement results of certain witnesses (e.g., $W_1$) can directly yield a quantitative measure of texture. This enriches the toolkit of texture resource theory, incorporating not only mathematical measures but also experimentally operable detection methods.
	
	\section{Conclusion}
	\par\noindent\par 
	In this work, we have conducted research within the framework of the quantum-state texture resource theory, with our main contributions lying in two aspects: the study of texture measures and the establishment of texture witnesses. In terms of quantification, we constructed a new texture measure $\mathcal{T}^{\text{GR}}_{\alpha,z}(\rho)$ based on the $\alpha$-$z$ R\'enyi relative entropy and established mathematical relationships between various existing measures, providing new perspectives and bounds for the quantitative description of texture. Regarding detection, we introduced, for the first time, the framework of texture witnesses. We provided a universal construction method and offered several experimentally measurable witness instances with distinct characteristics, which may furnish a feasible scheme for the experimental identification and verification of texture resources.
	\par
	Future research could delve deeper in the following directions: (1) exploring applications of the texture resource in specific quantum information processing tasks; (2) investigating the transformation rules of texture in composite tensor systems; (3) designing more efficient texture witness schemes with regard to their detection capability and experimental feasibility; and (4) constructing other qualified texture measures that may offer advantages in calculability or measurability.
	
	\section*{Data availability statement}
	No new data were created or analyzed in this study.
	
	\section*{Acknowledgments}
	This work was supported by National Natural Science Foundation of China (Grants No. 12271474).

\end{document}